\newtheorem{theorem}{Theorem}
\newtheorem{proposition}[theorem]{Proposition}
\newtheorem{lemma}[theorem]{Lemma}
\newtheorem{corollary}[theorem]{Corollary}
\theoremstyle{definition}
\newtheorem{remark}[theorem]{Remark}
\newtheorem{example}[theorem]{Example}
\newcommand{\be}{\begin{equation}}
\newcommand{\ee}{\end{equation}}
\newcommand{\bea}{\begin{eqnarray}}
\newcommand{\eea}{\end{eqnarray}}
\newcommand{\binH}{H}
\newcommand{\binrel}{D_2}
\newcommand{\var}{{\rm var}} 
\newcommand{\complex}{{\mathbb C}} 
\newcommand{\nat}{{\mathbb N}} 
\newcommand{\ket}[1]{|#1\rangle} 
\newcommand{\bra}[1]{\langle#1|} 
\newcommand{\tr}[1]{{\rm{tr}}\left[#1\right]} 
\newcommand{\1}{\mathbbm{1}}
\newcommand{\R}{\mathbb{R}}
\newcommand{\cB}{{\mathcal{B}}}
\newcommand{\supp}{{\rm supp}}
\newcommand{\ii}{\mathbbm{1}}
\newcommand{\C}{\mathbb{C}}
\def\>{{\rangle}}
\def\<{{\langle}}
\def\lsim{\mathrel{\rlap{\lower4pt\hbox{\hskip1pt$\sim$}}
    \raise1pt\hbox{$<$}}}                
\def\gsim{\mathrel{\rlap{\lower4pt\hbox{\hskip1pt$\sim$}}
    \raise1pt\hbox{$>$}}}                
\newcommand{\rank}[1]{\mathrm{rank}(#1)} 
\title{\vspace{-1.2cm}{\vspace{-0.5cm}\Large\textbf{An improved Landauer Principle with finite-size corrections}}}
\author{\vspace{-0.2cm}David Reeb\thanks{david.reeb@tum.de, m.wolf@tum.de}~}
\author{Michael M.~Wolf$^*$}
\affil{\vspace{-0.25cm}\small{Department of Mathematics, Technische Universit\"at M\"unchen, 85748 Garching, Germany}}
\date{}
\begin{document}

\maketitle

\vspace{-1cm}\begin{abstract}Landauer's Principle relates entropy decrease and heat dissipation during logically irreversible processes. Most theoretical justifications of Landauer's Principle either use thermodynamic reasoning or rely on specific models based on arguable assumptions. Here, we aim at a general and minimal setup to formulate Landauer's Principle in precise terms. We provide a simple and rigorous proof of an improved version of the Principle, which is formulated in terms of an equality rather than an inequality. The proof is based on quantum statistical mechanics concepts rather than on thermodynamic argumentation. From this equality version, we obtain explicit improvements of Landauer's bound that depend on the effective size of the thermal reservoir and reduce to Landauer's bound only for infinite-sized reservoirs.\end{abstract}

\vspace{-0.4cm}
\tableofcontents\vspace{-0.13cm}

\newpage

\section{Introduction}\label{introductionsection}
The Maxwell's Demon paradox suggested that one can lower the entropy of a gas of particles without expending energy, and thus violate the Second Law of Thermodynamics, if one has information about the positions and momenta of the particles \cite{maxwell1871,maxwelldemonbook}. During the resolution of this puzzle it became however clear that thermodynamics imposes physical constraints on information processing \cite{szilard29,vNeumannMathemGrundlagen}. Rolf Landauer \cite{landauer} recognized that it is the logically irreversible erasure of information that necessitates a corresponding entropy increase in the environment \cite{bennettreview82}; i.e.\ information erasure from the information-bearing degrees of freedom of a memory register or computer causes entropy to flow to the non-information-bearing degrees of freedom. At inverse temperature $\beta$, this entropy increase causes heat
\begin{align}
\Delta Q~\geq~\Delta S/\beta\label{landauersbasicinequality}
\end{align}
to be dissipated, where $\Delta S$ denotes the entropy decrease in the memory. This consequence is \emph{Landauer's Principle}, and the inequality (\ref{landauersbasicinequality}) is also called the \emph{Landauer bound} or \emph{limit}.

Since its inception \cite{landauer}, the above argument has been controversially discussed on different levels. For example, it has been disputed whether it is necessary to assume the validity of the Second Law of Thermodynamics in order to derive Landauer's Principle or whether, conversely, the Second Law itself is actually a consequence of the Principle (see e.g.\ \cite{earmannorton,bennettnotes03,maxwelldemonbook}). Situations have been reported -- both theoretically \cite{breakdownofLP,alickicorrelations} and in experiment \cite{expmuchmore} -- which supposedly violate Landauer's Principle. And it was actually already recognized by Bennett \cite{bennettreversible73,bennettreview82} that all computation \emph{can} be done reversibly, thereby avoiding irreversible erasure and requiring no heat dissipation in principle. On the other hand, the Principle was successful in exorcising Maxwell's Demon \cite{bennettreview82,maxwelldemonbook}, and a recent experiment \cite{experimentnature} approached Landauer's limit but could not surpass it. Attempts to formulate and prove Landauer's Principle by more microscopic methods followed later (e.g.\ \cite{piechocinska,sagawaueda09}), but they still have deficiencies as we discuss more detail in Section \ref{subsectpreworks}.

\medskip

Much of the misunderstanding and controversy around Landauer's Principle appears to be due to the fact that its general statement has not been written down formally or proved in a rigorous way in the framework of quantum statistical mechanics. To remedy this situation is the first goal of the present work.

We formulate in precise mathematical and statistical mechanics terms the setup for Landauer's Principle. The four assumptions are listed at the beginning of Section \ref{setupsubsection} (see also Fig.\ \ref{figsetup} for an overview of the setup). Our formulation encompasses processes more general than ``erasure'', and the setting is minimal in the sense that Landauer's bound can be violated when any one of our assumptions is dropped.

Our first main result is a proof of Landauer's Principle in the form of a sharpened equality version (Theorem \ref{landauereqntheorem}):\begin{equation}\label{statelandauerequalityinintro}\beta\Delta Q~=~\Delta S\,+\,I(S':R')\,+\,D(\rho'_R\|\rho_R)~\geq~\Delta S~,\end{equation}where the mutual information $I(S':R')$ quantifies the correlations built up between system and reservoir during the process and the relative entropy $D(\rho'_R\|\rho_R)$ can be physically interpreted as the free energy increase in the reservoir. Closer examination reveals that Landauer's bound $\beta\Delta Q\geq\Delta S$ can be tight only if $\Delta S=0$. The Landauer bound (\ref{landauersbasicinequality}) can thus be improved for all non-trivial processes.

Our second main result is then an explicit improvement of Landauer's bound (Section \ref{finitesizesect}), which will be possible when the thermal reservoir assisting in the process has a finite Hilbert space dimension $d<\infty$. A paradigmatic result is here (see Theorem \ref{maintheoremcombined}):\begin{align}\label{paradigmaticfinitiesizeeffect}\beta\Delta Q~\geq~\Delta S+\frac{2(\Delta S)^2}{\log^2(d-1)+4}\qquad\text{for~any~physical~process~in~which}~\Delta S\,\geq\,0~.\end{align}This is illustrated in Fig.\ \ref{relentropygraphs}:\ For small reservoirs, the necessary heat expenditure $\Delta Q$ lies several ten percent above the Landauer limit (\ref{landauersbasicinequality}). The main technical tool in deriving these finite-size effects is a tight entropy inequality between relative entropy and entropy difference \cite{inequalityarxivpaper}.

In Section \ref{extendednotionssection} and Appendix \ref{extendednotionssectionapp}, we present a few extensions of the setup from Section \ref{setupsubsection}. Section \ref{attainingsection} forms a counterpart to results like Eq.\ (\ref{paradigmaticfinitiesizeeffect}), as we construct processes that approach Landauer's bound (\ref{landauersbasicinequality}) arbitrarily closely by using a reservoir of unbounded size $d$.

\section{Setup and notation}\label{setupsect}
Here we formalize the exact setting in which we prove -- and improve -- Landauer's Principle. We avoid unnecessary excess structure that is present in some previous works (discussed in Section \ref{subsectpreworks}), and aim to motivate each necessary ingredient. This is the first step to a rigorous treatment of Landauer's Principle in Sections \ref{Lprinciplesharpened} and \ref{finitesizesect}. Our setup and the subsequent statements will be quantum-mechanical, but apply to the classical (probabilistic) case as well upon restriction to commuting states and Hamiltonians. In Section \ref{extendednotionssection} and Appendix \ref{extendednotionssectionapp} we discuss some extensions of the setup described here.

\subsection{Setup of the process}\label{setupsubsection}As commonly conceived, Landauer's process \cite{landauer} is supposed to ``erase'' or ``reset'' the state of a system by having it ``interact'' with a ``thermal reservoir'', bringing the system into a ``definite'' state, such as a fixed pure state. We use this conception as a motivation, but our setup will be more general and precise.

\bigskip

\noindent The four assumptions needed for Landauer's Principle are as follows (see also Fig.\ \ref{figsetup}):
\begin{enumerate}
\item[(a)]\vspace{-0.1cm}the process involves a ``system'' $S$ and a ``reservoir'' $R$, both described by Hilbert spaces,
\item[(b)]\vspace{-0.15cm}the reservoir $R$ is initially in a thermal state, $\rho_R=e^{-\beta H}/\tr{e^{-\beta H}}$, where $H$ is a Hermitian operator on $R$ (``Hamiltonian'') and $\beta\in[-\infty,+\infty]$ is the ``inverse temperature'',
\item[(c)]\vspace{-0.15cm}the system $S$ and the reservoir $R$ are initially uncorrelated, $\rho_{SR}=\rho_S\otimes\rho_R$,
\item[(d)]\vspace{-0.15cm}the process itself proceeds by unitary evolution, $\rho'_{SR}=U\rho_{SR}U^{\dagger}$.
\end{enumerate}
\vspace{-0.05cm}We now discuss each of these four assumptions in more detail, arguing that this setup is minimal.

\bigskip

The process acts on two subsystems, $S$ and $R$, and we call $S$ the ``system'' and $R$ the ``reservoir''. We model these as quantum systems with Hilbert spaces of finite dimensions $d_S$ and $d\equiv d_R$, respectively (see Fig.~\ref{figsetup}). The extension of our treatment to infinite-dimensional state spaces is discussed in Appendix \ref{sectioninfinitedim}.

Secondly, we require a Hamiltonian $H_R\equiv H$ of the reservoir to be given, i.e.\ a Hermitian operator $H=H^\dagger\in\cB(\complex^d)$. We furthermore assume that initially, i.e.\ before the process starts, the reservoir is in a thermal state
\begin{equation}\label{definethermalstateinsetup}
\rho_R~=~\frac{e^{-\beta H}}{\tr{e^{-\beta H}}}
\end{equation}
at inverse temperature $\beta\in[-\infty,+\infty]$ (for $\beta=\pm\infty$, $\rho_R$ is the maximally mixed state on the ground space of $\pm H$; see Appendix \ref{thermodynamicappendix} for more details on statistical mechanics).

The assumption of an initially thermal reservoir state to be consumed during the process is reasonable, since such states may be considered cheaply available; this is due to the physics intuition -- verified in several computable models -- that a reservoir governed by $H$ and at inverse temperature $\beta$ will generically relax to the state (\ref{definethermalstateinsetup}) (see e.g.\ \cite{brattelirobinson,thirringstatphys}) or to a state close to it (e.g.\ \cite{muellerwiebe}). Furthermore, if $\rho_R$ would not be a thermal state as in Eq.\ (\ref{definethermalstateinsetup}), then it would be possible to violate Landauer's bound (see Section \ref{subsectpreworks}); this is related to the fact that thermal states are the only \emph{completely passive} states, meaning that from an arbitrary number of state copies one cannot extract work by unitary operations alone \cite{puszworonowicz}. Lastly, the appearance of the initial thermal reservoir state (\ref{definethermalstateinsetup}) is motivated already by the mathematical need to have a definite value for $\beta$ appearing in Landauer's bound as stated in Eq.\ (\ref{landauersbasicinequality}).

We do not put any assumption on the initial system state $\rho_S$; in particular, it need not be a thermal state. The following developments and results are in fact completely independent of the system Hamiltonian, and it does not even need to be specified.

As the third assumption, we require system and reservoir to be initially uncorrelated, i.e.
\begin{align}\label{productstateassumptioninsetup}
\rho_{SR}~=~\rho_S\otimes\rho_R~.
\end{align}
This assumption will be important for Landauer's Principle to hold: If the initial state $\rho_{SR}$ would for instance be such that the reservoir $R$ had perfect classical correlations with $S$, then a unitary process could reduce the system entropy without any heat dissipation, in violation of Landauer's bound. This can be seen from the example in Section \ref{memorysection}, in which the system $S$ is brought into a final pure state without any change of the reduced state of the reservoir.

The product state assumption (\ref{productstateassumptioninsetup}) is standard in the theory of thermodynamics and in many common tasks in information processing, e.g.\ in the paradigmatic examples of resetting a register in a computer or when performing error correction. In these cases, the assisting reservoir $R$ is often assumed not to have previously interacted with the register $S$, such that their states are independent. When however system and reservoir have undergone prior interactions, they may be correlated, and we give a Landauer-like bound for this case in Eq.\ (\ref{verygeneralsecondlawwithmemory}). (The case of extreme correlations would correspond to reversible computation \cite{bennettreversible73,bennettreview82}, which indeed does not require any energy expenditure but suffers from error build-up in practical implementations.) Furthermore, as we are assuming an initially thermal reservoir (see Eq.\ (\ref{definethermalstateinsetup})), correlations in $SR$ would be unnatural unless the full initial state $\rho_{SR}$ were also thermal, but this would then require a (Hamiltonian) interaction term between $S$ and $R$; see also \cite{relativethermalization}. Some reported ``violations'' of Landauer's bound \cite{breakdownofLP,expmuchmore} can be explained by their not respecting the initial product state assumption (\ref{productstateassumptioninsetup}).

Extensions of the product state assumption (\ref{productstateassumptioninsetup}) and the inclusion of an additional memory register are discussed in Section \ref{memorysection} (see also Section \ref{correlationsubsection}).

\begin{figure}[t!]
\centering
\includegraphics[scale=0.9]{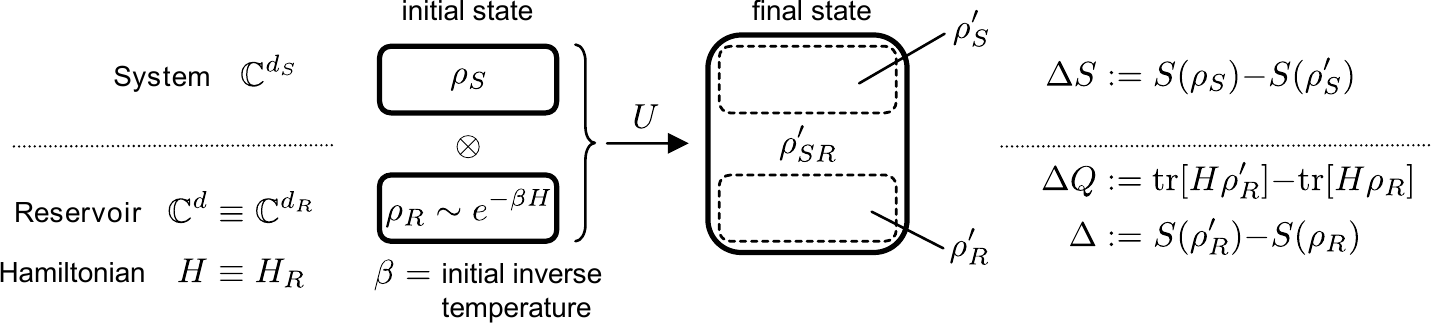}
\caption{\label{figsetup}Our formalization of the ``process'':~An initially uncorrelated state $\rho_S\otimes\rho_R$ of a system $S$ and a thermal reservoir $R$ ($\rho_R=e^{-\beta H}/\tr{e^{-\beta H}}$) is subjected to a unitary evolution, $\rho'_{SR}:=U(\rho_S\otimes\rho_R)U^\dagger$. This replaces $\rho_S$ by the marginal state $\rho'_S={\rm tr}_R[\rho'_{SR}]$. \emph{Landauer's Principle} (or \emph{Landauer's bound}) \cite{landauer} now relates the entropy decrease $\Delta S$ in the system to the heat $\Delta Q$ dissipated to the reservoir:~$\beta\Delta Q\geq\Delta S$. We rigorously prove and improve this inequality in the paper.}
\end{figure}

\bigskip

So far we have described the assumptions on the initial state $\rho_{SR}$, Eqs.\ (\ref{definethermalstateinsetup}) and (\ref{productstateassumptioninsetup}). The fourth and last assumption is that the process itself happens by any unitary evolution
\begin{align}\label{unitaryevolsetupsect}
\rho'_{SR}~=~U(\rho_S\otimes\rho_R)U^\dagger~=~U\rho_{SR}U^\dagger~,
\end{align}
where the unitary $U\in\cB(\complex^{d_S}\otimes\complex^d)$ acts jointly on $S$ and $R$. Importantly, the unitarity assumption implies that no unspecified environment $E$ may participate in the process and take on entropy, which is what may happen during a dissipative process described by quantum channels. Instead, the description (\ref{unitaryevolsetupsect}) forces one to explicitly include all resources used during the process in the description by $S$ and $R$. Furthermore, if a general quantum channel were allowed for the evolution $\rho_{SR}\mapsto\rho'_{SR}$, then there obviously could not be any restriction on the entropy changes and heat flows (defined below) and no version of Landauer's bound could hold.

The unitary $U$ in (\ref{unitaryevolsetupsect}) may for example be effected by a time-dependent Schr\"odinger evolution, where an interaction Hamiltonian $H^{int}_{SR}(t)$ is switched on at some point in time by an external agent and then switched off later, such that a total unitary $U$ acts on the initial state $\rho_{SR}$ and gives the final state (\ref{unitaryevolsetupsect}). Our setup and results below do however not require any such structure for $U$; in fact, they do not even depend on the concrete unitary $U$, but merely on the fact that the evolution $\rho_{SR}\mapsto\rho'_{SR}$ was effected by \emph{some} unitary.


The joint final state $\rho'_{SR}$ of system and reservoir may be correlated, which is the generic case. By $\rho'_S:=\text{tr}_R\left[\rho'_{SR}\right]$ we denote the final state of the system (``state after the process''), by $\rho'_R:=\text{tr}_S\left[\rho'_{SR}\right]$ the final state of the reservoir. Note in particular that we do not require $\rho'_S$ to be a pure state. This assumption is frequently made in the literature, but may not always be achieved as discussed in Section \ref{boundonpureness}.

\bigskip

For a process as just described, \emph{Landauer's Principle} relates the \emph{entropy decrease of the system}
\begin{equation}
\Delta S~:=~S(\rho_S)-S(\rho'_S)\label{DeltaS}
\end{equation}
(where $S(\cdot)$ denotes the von Neumann entropy, Eq.\ (\ref{vonneumannentropynotatoinsection})) to the \emph{heat transferred to the reservoir}
\begin{equation}
\Delta Q~:=~\tr{H(\rho'_R-\rho_R)}~=~\tr{H\rho'_R}-\tr{H\rho_R}\label{DeltaQ}~,
\end{equation}
which corresponds to the (average) increase in internal energy of the thermal reservoir. The term ``heat'' is justified because this energy is not ``ordered'' since $R$ is an initially thermal reservoir, which may absorb entropy from $S$ during the process and spread the energy over many states \cite{puszworonowicz,brattelirobinson}. Another quantity of central importance will be the \emph{entropy increase of the reservoir}
\begin{align}\label{defDeltainitially}
\Delta~:=~S(\rho'_R)-S(\rho_R)~,
\end{align}
which appears in the ``Second Law Lemma'' (Lemma \ref{propsecondlaw}) and in many computations below.

In the above notation, \emph{Landauer's Principle}, which we rigorously prove below, can be written as $\beta\Delta Q\geq\Delta S$ \cite{landauer,bennettreview82,maxwelldemonbook}.

In the commonly imagined situation of ``erasure of information'' (e.g.~if the process aims to bring the system to a pure final state), the entropy of the system $S$ decreases and $\Delta S$ is positive. But unless explicitly stated otherwise, our derivations also apply to the case where the entropy of $S$ increases, corresponding to $\Delta S\leq0$, and we subsume this by the term ``process'' as well. Similarly, if heat is transferred away from the reservoir, $\Delta Q$ will be negative. Also, $\Delta$ can be negative for some processes.

\bigskip

To illustrate the above framework, we give here a paradigmatic example:
\begin{example}[Swap process]\label{swapexample}Let both the system $S$ and the reservoir $R$ have bipartite structure and agree in the dimension of one subsystem:
\begin{equation}
\complex^{d_S}~=~\complex^{d_{sw}}\otimes\complex^{d_{S_2}}~,\quad\complex^{d}~=~\complex^{d_{sw}}\otimes\complex^{d_{R_2}}\qquad(d_{sw},\,d_{S_2},\,d_{R_2}\in{\mathbb N})~.
\end{equation}
We denote the reduced states of $\rho_S$ resp.\ $\rho_R$ w.r.t.\ these bipartitions by $\rho_{S_1}$, $\rho_{R_2}$, etc. The initial system state $\rho_S$ may be chosen arbitrarily. Also, in this example, $\rho_R$ may be chosen almost arbitrarily since every given \emph{full-rank} state $\rho_R$ can be written as the thermal state of a Hamiltonian $H:=-\log\rho_R$ at finite inverse temperature $\beta:=1$.

Let the process $U:={\mathbb F}_{S_1,R_1}\otimes\1_{S_2,R_2}$ be the unitary flip ${\mathbb F}_{S_1,R_1}$ swapping the two subsystems $S_1$ and $R_1$ of dimension $d_{sw}$, i.e.\ ${\mathbb F}_{S_1,R_1}\left(\ket{\psi_1}\otimes\ket{\psi_2}\right):=\ket{\psi_2}\otimes\ket{\psi_1}$, leaving $S_2$ and $R_2$ unaffected.

For such processes, one may now compute all above quantities (\ref{DeltaS})--(\ref{defDeltainitially}) explicitly:
\begin{align}
\Delta S~&=~S(\rho_S)-S(\rho_{S_2})-S(\rho_{R_1})~,\quad&\Delta~&=~-S(\rho_R)+S(\rho_{R_2})+S(\rho_{S_1})~,\label{swapeqnrelations}\\
\Delta Q~&=~\tr{H\left(\rho_{S_1}\otimes\rho_{R_2}\,-\,\rho_R\right)}~,\quad &I(S':R')~&=~I(S_1:S_2)_{\rho_S}+I(R_1:R_2)_{\rho_R}\label{mutualinfoforswap}~,
\end{align}
where the mutual information $I(S':R')$ between $S$ and $R$ in the final state $\rho'_{SR}$ is defined in Eq.\ (\ref{definemutualinfoinnotationsection}) and will become important later.

As specific examples in the paper we will often consider processes that swap $S$ and $R$ completely, i.e.\ that have $d_S=d=d_{sw}$ and $d_{S_2}=d_{R_2}=1$. In Section \ref{attainingsection} we consider a reservoir $\C^d=(\C^{d_S})^{\otimes k}$, consisting of $k$ system copies, and a $k$-step process that swaps $S$ successively with each reservoir subsystem.
\end{example}

\subsection{Differences to previous works}\label{subsectpreworks}
Much of the early work on Landauer's Principle is based on thermodynamic reasoning rather than statistical mechanics and assumes the validity of the Second Law of Thermodynamics; this already starts with \cite{landauer} and continues e.g.\ in \cite{bennettreview82,maxwelldemonbook}. Sometimes, the content of Landauer's Principle is even taken to be the non-decrease of entropy in any process operating between a system and reservoir \cite{bennettreview82,maxwelldemonbook,nielsenchuang}. This mix of notions and the presupposition of the Second Law in the first place has caused criticism, even questioning the validity of Landauer's Principle, see e.g.\ \cite{earmannorton} and \cite{bennettnotes03} for a response. It is only in some recent derivations of Landauer's Principle that the Second Law has not been assumed explicitly, e.g.\ in \cite{shizume95,piechocinska,sagawaueda09}; we follow this line.

Our derivation of Landauer's Principle does not assume the final state $\rho'_{SR}$ to be a product state, whereas this is a common assumption in many previous treatments, especially in those using some form of the Jarzynski equality \cite{piechocinska,jarzynskiequality} or treating the quantum case (see \cite{maxwelldemonbook}). Even more specially, many derivations require a pure final system state, which however is generally not even achievable (see Section \ref{boundonpureness}). Also, we do not require the final reservoir state $\rho'_R$ to commute with the reservoir Hamiltonian $H$ \cite{piechocinska,haltasaki}. Our proofs are fully quantum-mechanical and do not require basis choices (cf.\ \cite{piechocinska}) or specific ``trajectories'' (cf.\ \cite{haltasaki} and follow-up work).

We do however explicitly assume a product initial state, $\rho_{SR}=\rho_S\otimes\rho_R$; otherwise, Landauer's Principle may be violated as discussed below Eq.\ (\ref{productstateassumptioninsetup}). Note that in derivations based on thermodynamic reasoning, the product state assumption (\ref{productstateassumptioninsetup}) is largely implicit.


Some of the literature assumes a Hamiltonian to be given also for the system $S$ (e.g. \cite{piechocinska,sagawaueda09}), and sometimes even the initial state of $S$ is supposed to be thermal. Our treatment shows that these assumptions are unnecessary; the (inverse) temperature appearing in Landauer's bound is that of the assisting reservoir. Furthermore, the assumption of a thermal initial system state would severely limit the usefulness of Landauer's Principle to information processing applications, when for example the system initially contains some outcome of a previous computation.

A Hamiltonian for the system would be necessary in order to make statements about the ``work done on the system'' (e.g.\ \cite{sagawaueda09}), which however the above stated bound $\beta\Delta Q\geq\Delta S$ does not do and which we also do not consider in this paper. In particular, the total energy of $SR$ need not be defined, but even if a Hamiltonian for $S$ were given, the total energy would typically not be conserved during the process. Our disregard of system energy is also consistent with assuming a completely degenerate system Hamiltonian (e.g.\ \cite{phfaist}).

Some treatments in the literature assume effective stochastic dynamics (e.g.\ \cite{shizume95}) instead of a unitary as in (\ref{unitaryevolsetupsect}), whereas in our formalism all involved subsystems have to be taken into account explicitly. Often in the literature the allowed dynamics is not sufficiently well specified, and this had led to the famous Maxwell's Demon paradox \cite{maxwell1871,maxwelldemonbook}; the paradox is resolved by recognizing that the Demon's memory must not be treated like an entropy sink \cite{bennettreview82} (see also Section \ref{memorysection}). Furthermore, by writing the whole process as a single discrete step as in Eq.\  (\ref{unitaryevolsetupsect}), we do not put any requirement on the ``speed'' or other structure of the evolution (as e.g.~in \cite{piechocinska,shizume95}).

Lastly, by considering the von Neumann entropy $S(\rho)=-\tr{\rho\log\rho}$ and the averaged heat transfer $\Delta Q=\tr{H(\rho'_R-\rho_R)}$, our formalism and statement of Landauer's Principle concerns an ensemble of systems: The heat transfer may be different in each single instantiation of the process, but on average equals $\Delta Q$ in the thermodynamic limit of many independent processes; similarly, $\Delta S$ is the average information decrease in the system $S$ (cf.\ beginning of Section \ref{correlationsubsection}). This asymptotic formalism is most widespread in thermodynamics and information theory, but recently \emph{one-shot statements} have been considered in these fields (e.g.\ \cite{rennerthesis}). In particular, Landauer's Principle and other work extraction statements have been formulated in the one-shot framework \cite{renneroneshotqsideinfo,trulyworklike,oppenhoro,egloff,phfaist}, but some of these derivations still rely on the asymptotic Landauer bound (\ref{landauersbasicinequality}) as one of their ingredients.

Apart from the contrast between von Neumann vs.\ one-shot entropies, note that our explicit assumption of a finite-dimensional reservoir to derive the finite-size effects in Section \ref{finitesizesect} is disjoint from the assumption of a one-shot scenario, which is sensible to address a finite (small) number of repetitions of a process.

\subsection{Notation}\label{furthernotationsubsect}
This work considers predominantly finite-dimensional systems, in particular to examine finite-size effects, and we concentrate on these here. Some notions, however, will be extended to infinite dimensions and are discussed separately at the respective places (esp.\ Appendices \ref{sectioninfinitedim} and \ref{purestateerasure}). Our treatment is for quantum systems, but contains classical systems mostly as a special case (states as probability distributions or diagonal density matrices), with exceptions noted separately. For more details on entropic quantities, see \cite{ohyapetz,wehrlreview,coverthomas,nielsenchuang}.

For any $n\in{\mathbb N}$, we denote by $\cB(\C^n)$ the set of bounded operators acting on $\C^n$, i.e.\ the $n$-dimensional complex matrix algebra. An $n$-dimensional \emph{quantum state} $\rho$ is a positive semidefinite operator $\rho\in\cB(\C^n)$ of trace one \cite{nielsenchuang}.

The \emph{(von Neumann) entropy} of a quantum state $\rho$ is defined by
\begin{align}\label{vonneumannentropynotatoinsection}
S(\rho)~:=~-\tr{\rho\log\rho}~.
\end{align}
We use the natural logarithm exclusively (denoted as $\log$), so that all entropic quantities are measured in units of \emph{nat} ($\approx1.44\,\text{bits}$). This is important to notice in Section \ref{finitesizesect} for the finite-size effects, which are non-linear in the entropies.

The \emph{relative entropy} between two states $\sigma$ and $\rho$ is defined as
\begin{align}\label{definerelentinnotationsection}
D(\sigma\|\rho)~:=~\tr{\sigma\log\sigma}-\tr{\sigma\log\rho}~.
\end{align}
It is $D(\sigma\|\rho)=\infty$ iff $\supp[\sigma]\not\subseteq\supp[\rho]$. By \emph{Klein's inequality} \cite{ohyapetz,wehrlreview,coverthomas,nielsenchuang}, the relative entropy is always non-negative and it vanishes iff $\rho=\sigma$.

For a state $\rho_{AB}$ of a bipartite system with \emph{reduced states} $\rho_A:=\text{tr}_B\left[\rho_{AB}\right]$ and $\rho_B:=\text{tr}_A\left[\rho_{AB}\right]$, the \emph{mutual information} is defined as
\begin{equation}\label{definemutualinfoinnotationsection}
I(A:B)~:=~I(A:B)_{\rho_{AB}}~:=~S(\rho_A)+S(\rho_B)-S(\rho_{AB})~,
\end{equation}
which is always non-negative. Most often it will be clear from the context for which state the mutual information is evaluated and we omit the subscript, also writing $I(A':B')$ for the mutual information between systems $A$ and $B$ in the state $\rho'_{AB}$. We sometimes use similar notation for the entropy itself, e.g.\ $S(AB):=S(\rho_{AB})$ and $S(A'):=S(\rho'_A)$.

We further define the \emph{conditional entropy (of $A$ conditioned on $B$)} in a bipartite state $\rho_{AB}$:
\begin{align}\label{defineconditionalentropy}
S(A|B)~:=~S(A|B)_{\rho_{AB}}:=S(AB)-S(B)~.
\end{align}

A \emph{Hamiltonian} $H$ on a system (which will always be the reservoir $R$ in this paper) is a Hermitian operator. The corresponding \emph{thermal state} at \emph{inverse temperature} $\beta\in[-\infty,+\infty]$ is:
\begin{equation}
\rho_\beta~:=~\frac{e^{-\beta H}}{\tr{e^{-\beta H}}}~,\label{defthermalstate}
\end{equation}
with $\rho_{\pm\infty}$ defined to be the maximally mixed state on the ground space of $\pm H$, cf.\ Eq.\ (\ref{definethermalstateinsetup}). For more details on the thermodynamics of finite-dimensional systems, see Appendix \ref{thermodynamicappendix}. For general accounts of statistical mechanics, see \cite{brattelirobinson,thirringstatphys}.

RHS and LHS abbreviate ``right-hand-side'' and ``left-hand-side'', respectively.

\section{Landauer's Principle sharpened}\label{Lprinciplesharpened}

\subsection{Equality form of Landauer's Principle}\label{Lequalitysubsection}
We now rigorously prove Landauer's Principle in the setting described in Section \ref{setupsubsection}. We actually sharpen Landauer's Principle by proving an equality version, which reduces to the famous Landauer bound \cite{landauer} when two non-negative terms are dropped. One of these two terms can be physically interpreted as some manifestation of the Second Law of Thermodynamics in our setting. We prove this first:
\begin{lemma}[Second Law Lemma]\label{propsecondlaw}
Let $\rho_{SR}=\rho_S\otimes\rho_R\in\cB(\complex^{d_S})\otimes\cB(\complex^{d})$ be a product state on a bipartite finite-dimensional system $SR$, let $U\in\cB(\complex^{d_S}\otimes\complex^{d})$ be a unitary, and denote the state after the evolution by $\rho'_{SR}=U(\rho_S\otimes\rho_R)U^\dagger$, with reduced states $\rho'_S$ and $\rho'_R$. Then:
\begin{equation}\label{propentropyproduction}
\left[S(\rho'_S)-S(\rho_S)\right]\,+\,\left[S(\rho'_R)-S(\rho_R)\right]~=~I(S':R')~\geq~0~.
\end{equation}
In the notation of Section \ref{setupsubsection}, this reads
\begin{equation}\label{shorteqnformof2law}
\Delta~=~\Delta S\,+\,I(S':R')~\geq~\Delta S~,
\end{equation}
i.e.\ the reservoir's entropy increase $\Delta$ outweighs the systems's entropy decrease $\Delta S$.
\end{lemma}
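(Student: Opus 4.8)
The plan is to exploit the two standard facts that (i) the von Neumann entropy is invariant under unitary conjugation, so that $S(\rho'_{SR})=S(\rho_{SR})$, and (ii) the entropy is additive on product states, so that $S(\rho_{SR})=S(\rho_S\otimes\rho_R)=S(\rho_S)+S(\rho_R)$. Combining these immediately gives $S(\rho'_{SR})=S(\rho_S)+S(\rho_R)$.

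Next I would simply unfold the definition of the mutual information of the final state, $I(S':R')=S(\rho'_S)+S(\rho'_R)-S(\rho'_{SR})$, and substitute the identity from the previous step for $S(\rho'_{SR})$. This yields
\begin{equation}
I(S':R')~=~S(\rho'_S)+S(\rho'_R)-S(\rho_S)-S(\rho_R)~=~\left[S(\rho'_S)-S(\rho_S)\right]+\left[S(\rho'_R)-S(\rho_R)\right]~,
\end{equation}
which is exactly Eq.\ (\ref{propentropyproduction}). The non-negativity $I(S':R')\geq0$ is the standard subadditivity of von Neumann entropy (it is already recorded after Eq.\ (\ref{definemutualinfoinnotationsection})). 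Translating into the notation of Section \ref{setupsubsection}, the first bracket is $-\Delta S$ by Eq.\ (\ref{DeltaS}) and the second bracket is $\Delta$ by Eq.\ (\ref{defDeltainitially}), so the identity reads $\Delta=\Delta S+I(S':R')$, and dropping the non-negative term $I(S':R')$ gives $\Delta\geq\Delta S$, which is Eq.\ (\ref{shorteqnformof2law}).

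There is essentially no obstacle here: the proof is a two-line computation resting on unitary invariance, additivity on tensor products, and subadditivity of entropy. The only point deserving a word of care is that all entropies are finite (guaranteed by finite dimensionality), so the rearrangement of terms is unambiguous; in the infinite-dimensional extension one would need to check that the relevant entropies are finite or argue more carefully, but that is deferred to the appendix. If anything is worth emphasizing in the writeup, it is the conceptual reading: the correlations $I(S':R')$ created between $S$ and $R$ are precisely the ``missing'' entropy that accounts for the gap between the reservoir's entropy gain $\Delta$ and the system's entropy loss $\Delta S$, which is the sense in which this lemma is a sharp, equality-form avatar of the Second Law in this setting.
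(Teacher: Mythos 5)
Your proposal is correct and follows essentially the same route as the paper's proof: unitary invariance of the entropy plus additivity on product states to identify $S(\rho'_{SR})=S(\rho_S)+S(\rho_R)$, then recognizing the resulting combination as $I(S':R')\geq0$ (non-negativity of mutual information, i.e.\ subadditivity). No gaps; the finite-dimensionality remark about finiteness of all entropies is a fine, if optional, touch.
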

\begin{proof}Using the additivity of the von Neumann entropy for product states $\rho_{SR}=\rho_S\otimes\rho_R$ and the invariance of the entropy under the unitary evolution $\rho_S\otimes\rho_R\mapsto\rho'_{SR}$, we have:\begin{equation}
\begin{split}
S(\rho'_S)-S(\rho_S)+S(\rho'_R)-S(\rho_R)~
&=~S(\rho'_S)+S(\rho'_R)-S(\rho_S\otimes\rho_R)\\
&=~S(\rho'_S)+S(\rho'_R)-S(\rho'_{SR})\\
&=~I(S':R')_{\rho'_{SR}}~\geq~0~,
\end{split}
\end{equation}
where the last inequality follows from the non-negativity of the mutual information.
\end{proof}

The proof shows in particular that equality $\Delta=\Delta S$ in the Second Law Lemma is attained (meaning that no total entropy increase happens) iff the final state is a product state $\rho'_{SR}=\rho'_S\otimes\rho'_R$. In some accounts in the literature, the above Lemma \ref{propsecondlaw}, holding that the reservoir entropy increase exceeds the system entropy decrease, is already termed ``Landauer's Principle'' (cf.~\cite{bennettreview82,maxwelldemonbook,nielsenchuang}). We, however, take the term ``Landauer's Principle'' to mean a bound on the heat dissipation $\Delta Q$ to an initially thermal reservoir necessitated by an entropy decrease $\Delta S$ in some other system.

\bigskip

We now state and prove a sharpened version of Landauer's Principle:
\begin{theorem}[Equality form of Landauer's Principle]\label{landauereqntheorem}
Let $\rho_{SR}=\rho_{S}\otimes\rho_R\in\cB(\complex^{d_S})\otimes\cB(\complex^{d})$ be a product state on a bipartite finite-dimensional system $SR$, where $\rho_R=e^{-\beta H}/\tr{e^{-\beta H}}$ is the thermal state corresponding to a Hamiltonian $H=H^\dagger\in\cB(\complex^d)$ at inverse temperature $\beta\in[-\infty,+\infty]$. Let $U\in\cB(\complex^{d_S}\otimes\complex^{d})$ be a unitary, and denote the state after the evolution by $\rho'_{SR}=U(\rho_S\otimes\rho_R)U^\dagger$, with reduced states $\rho'_S$ and $\rho'_R$. Then:
\begin{equation}
\left[S(\rho_S)-S(\rho'_S)\right]\,+\,I(S':R')\,+\,D(\rho'_R\|\rho_R)~=~\beta\left(\tr{H\rho'_R}-\tr{H\rho_R}\right)~.\label{landauereqnlong}
\end{equation}
In the notation of Section \ref{setupsubsection}, this reads
\begin{equation}
\beta\Delta Q~=~\Delta S\,+\,I(S':R')\,+\,D(\rho'_R\|\rho_R)~,\label{landauereqnshort}
\end{equation}
which implies Landauer's bound \cite{landauer}
\begin{equation}
\beta\Delta Q~\geq~\Delta S~.\label{Lineqineqthm}
\end{equation}
\end{theorem}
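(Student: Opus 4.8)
The plan is to compute $\beta\Delta Q$ directly from the definition of the thermal state and massage the resulting expression into the three advertised terms. The starting observation is that since $\rho_R=e^{-\beta H}/\tr{e^{-\beta H}}$, we have $-\log\rho_R=\beta H+\log Z\cdot\1$ with $Z=\tr{e^{-\beta H}}$, so for \emph{any} state $\sigma$ on $R$ one has $\beta\tr{H\sigma}=-\tr{\sigma\log\rho_R}-\log Z$. Applying this to $\sigma=\rho'_R$ and to $\sigma=\rho_R$ and subtracting, the $\log Z$ terms cancel and we get
\begin{equation}
\beta\Delta Q~=~\beta\big(\tr{H\rho'_R}-\tr{H\rho_R}\big)~=~-\tr{\rho'_R\log\rho_R}+\tr{\rho_R\log\rho_R}~.\label{plan:step1}
\end{equation}
(The cases $\beta=\pm\infty$ should be checked separately, or handled by noting $-\log\rho_R$ still makes sense as $\beta H$ restricted appropriately; this is a minor point.)

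Next I would add and subtract $\tr{\rho'_R\log\rho'_R}$ on the right-hand side of \eqref{plan:step1} to reconstruct a relative entropy. Indeed,
\begin{equation}
-\tr{\rho'_R\log\rho_R}~=~\big(\tr{\rho'_R\log\rho'_R}-\tr{\rho'_R\log\rho_R}\big)-\tr{\rho'_R\log\rho'_R}~=~D(\rho'_R\|\rho_R)+S(\rho'_R)~,
\end{equation}
while $\tr{\rho_R\log\rho_R}=-S(\rho_R)$. Substituting back gives
\begin{equation}
\beta\Delta Q~=~D(\rho'_R\|\rho_R)+S(\rho'_R)-S(\rho_R)~=~D(\rho'_R\|\rho_R)+\Delta~,
\end{equation}
with $\Delta$ the reservoir entropy increase of Eq.~(\ref{defDeltainitially}). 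At this point the Second Law Lemma (Lemma \ref{propsecondlaw}) does the rest: it gives $\Delta=\Delta S+I(S':R')$, and inserting this yields exactly $\beta\Delta Q=\Delta S+I(S':R')+D(\rho'_R\|\rho_R)$, which is Eq.~(\ref{landauereqnshort})/(\ref{landauereqnlong}). Finally, Landauer's bound (\ref{Lineqineqthm}) follows immediately because both $I(S':R')\geq0$ (non-negativity of mutual information) and $D(\rho'_R\|\rho_R)\geq0$ (Klein's inequality).

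I do not expect a serious obstacle here — the argument is essentially a two-line computation plus an invocation of the already-proven Second Law Lemma. The only thing requiring a moment's care is the handling of the degenerate boundary cases $\beta=\pm\infty$ (where $\rho_R$ is the maximally mixed state on the ground space of $\pm H$ and one must interpret $\beta\Delta Q$ and $-\log\rho_R$ as a limit or a restriction to the support), and making sure that $D(\rho'_R\|\rho_R)$ is finite precisely when the support condition $\supp[\rho'_R]\subseteq\supp[\rho_R]$ holds — which is automatic when $\beta$ is finite since then $\rho_R$ has full rank. Everything else is bookkeeping with traces of $\rho\log\rho$.
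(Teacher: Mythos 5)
Your proposal is correct and is essentially the paper's own proof run in the opposite direction: the paper starts from $\Delta S+I(S':R')=S(\rho'_R)-S(\rho_R)$ (the Second Law Lemma) and expands $S(\rho_R)$ via the thermal form of $\rho_R$, while you start from $\beta\Delta Q$ and use $-\log\rho_R=\beta H+\log Z\,\1$; the algebra, the appeal to Lemma \ref{propsecondlaw}, and the final use of non-negativity of $I(S':R')$ and $D(\rho'_R\|\rho_R)$ are the same. The only point where you are thinner than the paper is the $\beta=\pm\infty$ case, which the paper settles by noting that there $\Delta Q\geq 0$ automatically and then splitting into $\Delta Q=0$ (where $\rho'_R$ stays in the ground space and both sides of Eq.\ (\ref{landauereqnlong}) vanish) and $\Delta Q>0$ (where $D(\rho'_R\|\rho_R)=+\infty$ matches $\beta\Delta Q=+\infty$) --- your remark about restricting to the support is pointing at exactly this argument, so it is an acceptable sketch rather than a gap.
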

\begin{proof}
First consider the case $\beta\in(-\infty,+\infty)$. Using the Second Law Lemma (Lemma \ref{propsecondlaw}) in the first line, we have
\begin{align}
\Delta S+I(S':R')~
&=~S(\rho_S)-S(\rho'_S)+I(S':R')~=~S(\rho'_R)-S(\rho_R)\label{firstlineinproofofLeq}\\
&=~-\tr{\rho'_R\log\rho'_R}+\tr{\rho_R\log\frac{e^{-\beta H}}{\tr{e^{-\beta H}}}}\nonumber\\
&=~-\tr{\rho'_R\log\rho'_R}+\tr{\rho_R\left(-\beta H-\1\log\tr{e^{-\beta H}}\right)}\nonumber\\
&=~-\tr{\rho'_R\log\rho'_R}-\beta\tr{H\rho_R}-\log\tr{e^{-\beta H}}~+~\beta\tr{H\rho'_R}-\beta\tr{H\rho'_R}\nonumber\\
&=~\beta\tr{H\left(\rho'_R-\rho_R\right)}-\tr{\rho'_R\log\rho'_R}\,+\,\tr{\rho'_R\log\frac{e^{-\beta H}}{\tr{e^{-\beta H}}}}\nonumber\\
&=~\beta\Delta Q\,-\,D(\rho'_R\|\rho_R)~.\nonumber
\end{align}

In the case $\beta=+\infty$, $\rho_R=P_g/\dim(P_g)$ is the normalised projector onto the ground state space of $H$. This implies $\tr{H\rho'_R}\geq\tr{H\rho_R}$, i.e.~$\Delta Q\geq0$. If $\Delta Q=0$, then $\rho'_R$ is supported in the ground state space as well, so that one can continue after line (\ref{firstlineinproofofLeq}) with
\begin{equation}
S(\rho'_R)-S(\rho_R)~=~-\tr{\rho'_R\log\rho'_R}-\log\dim(P_g)~=~-D(\rho'_R\|\rho_R)~,
\end{equation}
yielding that both sides of (\ref{landauereqnlong}) vanish. If $\Delta Q>0$, then $\rho'_R$ has support outside the ground state space of $H$, i.e.~outside the support of $\rho_R$, so that $D(\rho'_R||\rho_R)=+\infty$ and both sides of (\ref{landauereqnlong}) equal each other again. The reasoning in the case $\beta=-\infty$ is exactly analogous (or, alternatively, follows from the substitutions $H\mapsto-H$, $\beta\mapsto-\beta$).

Lastly, the Landauer bound (\ref{Lineqineqthm}) follows from the fact that the mutual information and the relative entropy are both non-negative.
\end{proof}



An equality equivalent to Eq.\ (\ref{landauereqnshort}) has been derived in \cite{esposito} before. There, however, the aim was to identify reversible and irreversible contributions to the entropy change, and no connection to Landauer's Principle was established. See also the ``Note added'' in Section \ref{openquestionsect}.

For extensions of Landauer's Principle (Theorem \ref{landauereqntheorem}) to infinite-dimensional separable Hilbert spaces, see Appendix \ref{sectioninfinitedim}.

\subsection{Equality cases in Landauer's bound $\beta\Delta Q\geq\Delta S$}\label{equalitysubsection}

The equality form of Landauer's Principle (Theorem \ref{landauereqntheorem}) allows us to investigate how tight the Landauer bound $\beta\Delta Q\geq\Delta S$ is (see Eq.\ (\ref{Lineqineqthm})). The basic result here is that Landauer's bound holds with equality iff, roughly speaking, the process does not do anything:
\begin{corollary}[Equality cases in the Landauer bound]\label{correqualityL}Consider a process as described in Theorem \ref{landauereqntheorem}. Then, Landauer's bound $\beta\Delta Q\geq\Delta S$ holds with equality iff there exists a unitary $V\in\cB(\complex^{d_S})$ such that
\begin{equation}\label{conditionsforequality1}
\begin{split}
~~\rho'_{SR}~&=~\rho'_S\otimes\rho'_R\\
\text{with}~~~~~\rho'_S~&=~V\rho_S V^\dagger\\
\text{and}~~~~~\rho'_R~&=~\rho_R~.
\end{split}
\end{equation}
Equivalently, Landauer's bound holds with equality iff
\begin{equation}\label{conditionsforequality2}
\Delta S~=~\Delta Q~=~0~.
\end{equation}
\end{corollary}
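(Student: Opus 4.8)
The plan is to use the equality form of Landauer's Principle (Theorem~\ref{landauereqntheorem}) as the essential tool, since it expresses the slack in Landauer's bound as a sum of two manifestly non-negative quantities. From Eq.~(\ref{landauereqnshort}), $\beta\Delta Q-\Delta S=I(S':R')+D(\rho'_R\|\rho_R)$, so equality in Landauer's bound holds if and only if both $I(S':R')=0$ and $D(\rho'_R\|\rho_R)=0$ simultaneously. The second condition, by Klein's inequality (the equality case noted after Eq.~(\ref{definerelentinnotationsection})), is equivalent to $\rho'_R=\rho_R$. The first condition, by the equality case of subadditivity (equivalently, vanishing of mutual information), is equivalent to the final state being a product, $\rho'_{SR}=\rho'_S\otimes\rho'_R$. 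Combining these two gives $\rho'_{SR}=\rho'_S\otimes\rho_R$.

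The next step is to extract the unitary $V$ relating $\rho'_S$ and $\rho_S$. Here I would invoke the invariance of the entropy under the global unitary: $S(\rho'_{SR})=S(\rho_S\otimes\rho_R)=S(\rho_S)+S(\rho_R)$. On the other hand, with $\rho'_{SR}=\rho'_S\otimes\rho_R$ we get $S(\rho'_{SR})=S(\rho'_S)+S(\rho_R)$, hence $S(\rho'_S)=S(\rho_S)$, i.e.\ $\Delta S=0$. Together with $\rho'_R=\rho_R$ this gives $\Delta Q=\tr{H(\rho'_R-\rho_R)}=0$, establishing the implication toward Eq.~(\ref{conditionsforequality2}); the converse (that $\Delta S=\Delta Q=0$ forces equality in Landauer's bound) is immediate from Eq.~(\ref{landauereqnshort}) since each of the two non-negative terms must then vanish, and then one re-derives the structural statement (\ref{conditionsforequality1}) as above. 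To produce $V$ itself: since $\rho_S$ and $\rho'_S$ have the same spectrum (the global unitary maps $\rho_S\otimes\rho_R$ to $\rho'_S\otimes\rho_R$, and cancelling the common tensor factor $\rho_R$, which has full rank on its support, forces equal multiplicities of eigenvalues of $\rho_S$ and $\rho'_S$), there exists a unitary $V\in\cB(\complex^{d_S})$ with $\rho'_S=V\rho_S V^\dagger$. This gives the forward direction of (\ref{conditionsforequality1}); the backward direction is a direct check that such a configuration yields $I(S':R')=0$ and $D(\rho'_R\|\rho_R)=0$, hence equality.

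The one subtlety I expect to need care with is the argument that $\rho_S$ and $\rho'_S$ are isospectral. Strictly, what the global unitarity gives is that $\rho_S\otimes\rho_R$ and $\rho'_S\otimes\rho_R$ are unitarily equivalent, hence isospectral \emph{as operators on $\complex^{d_S}\otimes\complex^{d}$}. To descend to $\rho_S\cong\rho'_S$ one compares eigenvalue multiplicities: the multiplicity of a value $\lambda$ in $\sigma\otimes\rho_R$ is $\sum_{\mu\in\spec\rho_R}(\text{mult.\ of }\mu\text{ in }\rho_R)\cdot(\text{mult.\ of }\lambda/\mu\text{ in }\sigma)$. Knowing this function of $\lambda$ for both $\sigma=\rho_S$ and $\sigma=\rho'_S$, and knowing $\rho_R$ has at least one positive eigenvalue, one can unwind the multiplicities of $\rho_S$ from those of $\rho_S\otimes\rho_R$ (e.g.\ by ordering eigenvalues and proceeding from the top), and likewise for $\rho'_S$; equality of the products then forces equality of the factor spectra. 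The rest of the proof is routine bookkeeping with the already-established equality conditions for Klein's inequality and for subadditivity.
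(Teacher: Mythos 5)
Your proposal is correct and follows essentially the same route as the paper: read off $I(S':R')=D(\rho'_R\|\rho_R)=0$ from the equality form (\ref{landauereqnshort}), translate these via the equality cases of mutual information and Klein's inequality into $\rho'_{SR}=\rho'_S\otimes\rho_R$, and obtain $V$ from the unitary equivalence of $\rho_S\otimes\rho_R$ and $\rho'_S\otimes\rho_R$ by cancelling the spectrum of $\rho_R$ (your multiplicity-unwinding sketch is a valid, indeed more detailed, version of the paper's one-line spectral cancellation). The only cosmetic difference is that you get $\Delta S=0$ directly from entropy additivity and unitary invariance before extracting $V$, whereas the paper deduces it from the isospectrality; both are fine.
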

\begin{proof}By the equality version (\ref{landauereqnshort}) of Landauer's Principle and due to the non-negativity of the mutual information and the relative entropy, one has equality in Landauer's bound iff $I(S':R')=D(\rho'_R\|\rho_R)=0$. This is equivalent to $\rho'_{SR}$ being a product state $\rho'_{SR}=\rho'_S\otimes\rho'_R$ and $\rho'_R=\rho_R$, i.e.~to the first and third condition in (\ref{conditionsforequality1}).

This then already implies the second condition in (\ref{conditionsforequality1}) as follows. By the assumptions on the process, the states $\rho_{SR}=\rho_S\otimes\rho_R$ and $\rho'_{SR}=\rho'_S\otimes\rho_R$ before and after the process are related by a unitary transformation, $\rho'_{SR}=U\rho_{SR}U^\dagger$, and thus have the same spectra (as multisets, i.e.~including multiplicities):~${\rm{mspec}}(\rho_S\otimes\rho_R)={\rm{mspec}}(\rho'_S\otimes\rho_R)$. As the spectrum of a product state equals the pointwise product of the individual spectra, one has
\begin{equation}
{\rm{mspec}}(\rho_S)\,\cdot\,{\rm{mspec}}(\rho_R)~=~{\rm{mspec}}(\rho'_S)\,\cdot\,{\rm{mspec}}(\rho_R)~,\label{productofmspecs}
\end{equation}
and since $\rho_R$ has a non-zero eigenvalue, this implies ${\rm{mspec}}(\rho_S)={\rm{mspec}}(\rho'_S)$. So, $\rho_S$ and $\rho'_S$ are two Hermitian matrices with identical spectra, and are thus related by a unitary transformation, $\rho'_S=V\rho_S V^\dagger$ with $V\in\cB(\complex^{d_S})$.

Finally, note that the second and third condition in (\ref{conditionsforequality1}) imply $\Delta S=0$ and $\Delta Q=0$, respectively, and thus (\ref{conditionsforequality2}). Conversely, (\ref{conditionsforequality2}) obviously implies $\Delta S=\beta\Delta Q$.
\end{proof}

By Corollary \ref{correqualityL}, equality $\Delta S=\beta\Delta Q$ holds only if the process transforms the system in a unitary way and leaves the reservoir untouched, i.e.\ $\rho'_{SR}=\left(V\otimes\1_R\right)\rho_{SR}\left(V\otimes\1_R\right)^\dagger$ (note, however, that possibly $U\neq V\otimes\1_R$ when $\rho_S\otimes\rho_R$ has degenerate eigenvalues, as then the unitary transformation achieving $\rho_S\otimes\rho_R\mapsto\rho'_{SR}$ is not unique). Then there is no change in the information of the system and zero heat flow to the reservoir. In this sense, only trivial processes satisfy $\Delta S=\beta \Delta Q$; this statement remains basically true in infinite dimensions as well (Appendix \ref{sectioninfinitedim}).

\bigskip

Considering the converse implication of Corollary \ref{correqualityL}, Landauer's bound is a \emph{strict} inequality $\beta\Delta Q>\Delta S$ for any process with nonzero entropy decrease ($\Delta S\neq0$) or nonzero heat flow ($\Delta Q\neq0$). In Section \ref{finitesizesect} we will in fact derive such non-trivial lower bounds on the difference $\beta\Delta Q-\Delta S$ between the two sides of Landauer's bound (\ref{Lineqineqthm}). More precisely, we will look for a non-negative function $g(\Delta S)$ satisfying $\beta\Delta Q\geq\Delta S+g(\Delta S)$, with $g(\Delta S)>0$ for $\Delta S\neq0$; similarly, for a function $h(\Delta Q)$ such that $\beta\Delta Q\geq\Delta S+h(\Delta Q)$, with $h(\Delta Q)>0$ for $\Delta Q\neq0$.

When one fixes (or puts upper bounds on) both the system and reservoir dimensions $d_S$ and $d$, then the existence of such functions $g$ and $h$ follows because the entropy, mutual information and relative entropy are sufficiently continuous and the space of all processes as well as the state space are compact. Our functions $g$ and $h$ will indeed explicitly depend on the dimension $d$ of the reservoir. Conversely, in Section \ref{attainingsection} we show that any non-trivial $g$ or $h$ actually \emph{has to} depend on the reservoir dimension, since in the limit of large reservoir sizes $d$ we construct explicit processes coming arbitrarily close to attaining the bound $\beta\Delta Q\geq\Delta S$.


\subsection{A bound on the pureness of the final state}\label{boundonpureness}
Several discussions in the literature formulate Landauer's Principle for processes having a \emph{pure} final state $\rho'_S$, i.e.\ where the system $S$ is being brought into a \emph{definite} microstate and all information has been ``erased''. This assumption is for example made in the works \cite{piechocinska,shizume95} aiming to derive Landauer's Principle. It is also implicit in Landauer's original paper \cite{landauer} as well as in the many references that employ or ``derive'' the ubiquituous claim that an amount $(\log 2)/\beta$ of heat has to be dissipated in the ``erasure of a (qu-)bit'' (see e.g.\ several papers reprinted in \cite{maxwelldemonbook}). The latter situation would correspond to $\Delta S=\log 2$ on a system of dimension $d_S=2$, which automatically forces the final system state $\rho'_S$ to be pure, whereas the initial state $\rho_S$ must have been completely mixed.

Here we point out that a Landauer process as described above can in general \emph{not} reduce the rank of the system state $\rho_S$. This is possible only with a reservoir at strictly zero temperature or with a reservoir Hamiltonian having formally infinite energies (see below). The following impossibility result thus shows in particular that some previous statements of Landauer's Principle in the literature are void. This issue is also related to the ``unattainability formulation'' of the ``Third Law of Thermodynamics'', see also the discussions in \cite{karenarmen,nogroundstatecooling,ticozziviola}.

\bigskip

We first analyze quantitatively how the smallest eigenvalue of the system state can change during the process $\rho'_{S}={\rm tr}_R[U(\rho_S\otimes\rho_R)U^\dagger]$, described in Section \ref{setupsubsection}, using a reservoir of finite dimension $d<\infty$. Writing $\lambda_{min}(A)$ for the smallest eigenvalue of a Hermitian operator $A$, and denoting by $\ket{\psi}$ any normalized eigenvector of $\rho'_S$ corresponding to $\lambda_{min}(\rho'_S)$, we have (where $\{\ket{i_R}\}_i$ denotes any orthonormal basis for the reservoir system $R$):
\begin{align}
\lambda_{min}(\rho'_S)~&=~\bra{\psi}\rho'_S\ket{\psi}~=~\sum_{i=1}^d\bra{\psi}\bra{i_R}\,\rho'_{SR}\,\ket{\psi}\ket{i_R}\\
&\geq~\sum_{i=1}^d\lambda_{min}(\rho'_{SR})~=~d\,\lambda_{min}(\rho_S\otimes\rho_R)~=~d\,\lambda_{min}(\rho_R)\lambda_{min}(\rho_S)~.\label{lambdaminrhoprimeS}
\end{align}
Denoting by $H_{min}$ and $H_{max}$ the minimal resp.\ maximal energy (eigenvalue) of the reservoir Hamiltonian $H$, we can lower bound the minimal eigenvalue in the thermal state $\rho_R$:
\begin{align}
\lambda_{min}(\rho_R)~=~\frac{e^{-\beta H_{max}}}{\tr{e^{-\beta H}}}~\geq~\frac{e^{-\beta H_{max}}}{d\,e^{-\beta H_{min}}}~=~\frac{e^{-\beta(H_{max}-H_{min})}}{d}~\geq~\frac{e^{-2\beta\|H\|}}{d}~,
\end{align}
with the operator norm $\|H\|$ and assuming $\beta\in[0,\infty]$ (the extension to negative $\beta$ is trivial). Plugging back into (\ref{lambdaminrhoprimeS}) gives:
\begin{proposition}[Bound on the pureness of the final state]\label{propositionlambdamin}Consider any process as described in Theorem \ref{landauereqntheorem}, with a reservoir at inverse temperature $\beta\in[0,\infty]$. Then:
\begin{align}\label{boundonlambdamin}
\lambda_{min}(\rho'_S)~\geq~e^{-\beta(H_{max}-H_{min})}\lambda_{min}(\rho_S)~\geq~e^{-2\beta\|H\|}\lambda_{min}(\rho_S)~,
\end{align}
where $H_{min}$ ($H_{max}$) denotes the minimal (maximal) eigenvalue of the reservoir Hamiltonian $H$.
\end{proposition}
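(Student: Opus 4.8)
The plan is to prove Proposition \ref{propositionlambdamin} exactly along the lines that the surrounding text already sketches, collecting the two displayed chains of inequalities into one statement. The key structural observation is that the final reservoir-assisted unitary does not change the global spectrum, so the smallest global eigenvalue is rigidly controlled by the initial product state.

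First I would fix a normalized eigenvector $\ket{\psi}$ of $\rho'_S$ for the eigenvalue $\lambda_{min}(\rho'_S)$ and an orthonormal basis $\{\ket{i_R}\}_{i=1}^d$ of the reservoir, and write $\lambda_{min}(\rho'_S)=\bra{\psi}\rho'_S\ket{\psi}=\sum_{i=1}^d\bra{\psi}\!\bra{i_R}\rho'_{SR}\ket{\psi}\!\ket{i_R}$, using $\rho'_S=\mathrm{tr}_R[\rho'_{SR}]$. Each summand is a diagonal matrix element of the positive operator $\rho'_{SR}$ and is therefore at least $\lambda_{min}(\rho'_{SR})$, so $\lambda_{min}(\rho'_S)\ge d\,\lambda_{min}(\rho'_{SR})$. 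Next, since $\rho'_{SR}=U(\rho_S\otimes\rho_R)U^\dagger$ is unitarily equivalent to $\rho_S\otimes\rho_R$, we have $\lambda_{min}(\rho'_{SR})=\lambda_{min}(\rho_S\otimes\rho_R)=\lambda_{min}(\rho_S)\,\lambda_{min}(\rho_R)$, the last step because the spectrum of a tensor product is the product of the spectra and all eigenvalues are nonnegative. Combining, $\lambda_{min}(\rho'_S)\ge d\,\lambda_{min}(\rho_S)\,\lambda_{min}(\rho_R)$.

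It then remains to lower bound $\lambda_{min}(\rho_R)$. With $\beta\in[0,\infty]$, the smallest eigenvalue of $\rho_R=e^{-\beta H}/\mathrm{tr}[e^{-\beta H}]$ is $e^{-\beta H_{max}}/\mathrm{tr}[e^{-\beta H}]$, and the partition function is bounded above by $d\,e^{-\beta H_{min}}$ (there are $d$ eigenvalues, each at most $e^{-\beta H_{min}}$), giving $\lambda_{min}(\rho_R)\ge e^{-\beta(H_{max}-H_{min})}/d$. Finally $H_{max}-H_{min}\le 2\|H\|$ yields the weaker but operator-norm-only form. Substituting back, the factor $d$ cancels and we obtain $\lambda_{min}(\rho'_S)\ge e^{-\beta(H_{max}-H_{min})}\lambda_{min}(\rho_S)\ge e^{-2\beta\|H\|}\lambda_{min}(\rho_S)$. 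For the boundary case $\beta=+\infty$ one reads $e^{-\beta(H_{max}-H_{min})}$ as $0$ when $H_{max}>H_{min}$ (a vacuous bound, consistent with the fact that a strictly zero-temperature reservoir can reduce the rank) and as $1$ when $H$ is a multiple of the identity, so the statement still holds; the negative-$\beta$ case follows by $H\mapsto -H$ as noted in the text.

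There is essentially no hard step here: the proof is a short concatenation of the rank/trace-out inequality, unitary invariance of the spectrum, multiplicativity of spectra under tensor products, and an elementary bound on the Gibbs partition function. The only point requiring a line of care is the treatment of the extreme values $\beta=\pm\infty$, where $\rho_R$ is the normalized projector onto the ground space and one must check that the inequality degenerates gracefully rather than breaking; this is handled by interpreting the exponential prefactor as above and noting that when it is positive (i.e.\ $H$ is scalar) the thermal state is maximally mixed and the bound is immediate.
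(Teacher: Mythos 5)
Your proof is correct and follows essentially the same route as the paper: the paper's argument consists precisely of the two displayed chains you reproduce (trace-out bound via diagonal matrix elements of $\rho'_{SR}$, unitary invariance and multiplicativity of spectra, then the Gibbs-state bound $\lambda_{min}(\rho_R)\geq e^{-\beta(H_{max}-H_{min})}/d$), with the factor $d$ cancelling exactly as you note. Your explicit care with the $\beta=\pm\infty$ boundary cases is a slightly more detailed write-up of what the paper leaves implicit, but it is not a different method.
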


This means in particular that an initially full-rank state $\rho_S$ cannot be purified (cf.\ also \cite{nogroundstatecooling,ticozziviola}) unless $\beta(H_{max}-H_{min})=\infty$, i.e.~for a zero-temperature reservoir ($\beta=\infty$) or for a Hamiltonian some of whose energy levels are formally $\infty$; see also Appendices \ref{sectioninfinitedim} and \ref{purestateerasure} for the latter case. Generally, $\lambda_{min}(\rho'_S)\ll\lambda_{min}(\rho_S)$ can be achieved only by very disparate energy scales in $H$ compared to the ambient temperature $1/\beta$. The analysis leading up to Eq.~(\ref{boundonlambdamin}) also shows that $\rank{\rho'_S}\geq\rank{\rho_S}$ unless $\beta\|H\|=\infty$, since we are in finite dimensions here.

When one does allow for formally infinite energies in $H$, then for any desired $\rho'_S$ and any inverse temperature $\beta\in(0,\infty)$ one can just engineer a suitable reservoir:\ define the Hamiltonian $H:=(-\log\rho'_S)/\beta$ (using the formal convention $-\log0:=+\infty$), so that $\rho_R=\rho'_S$, and let $U$ be the process swapping $S$ and $R$ (see Example \ref{swapexample}). But note that the heat $\Delta Q=\tr{H(\rho'_R-\rho_R)}=\tr{(\rho_R-\rho'_R)\log\rho_R}/\beta$ (and not merely the norm $\|H\|$) is infinite in any process with $\rank{\rho'_R}>\rank{\rho_R}$, which necessarily happens in finite dimensions for any process achieving $\rank{\rho'_S}<\rank{\rho_S}$. At strictly zero temperature ($\beta=\infty$), similar rank-decreasing processes can be constructed without infinite $\Delta Q$.

In Appendix \ref{purestateerasure} we exhibit rank-decreasing processes at finite temperature and having finite heat flow $\Delta Q$ (and actually coming arbitrarily close to saturating the inequality $\beta\Delta Q\geq\Delta S$); such processes however need both an infinite-dimensional reservoir and formally infinite Hamiltonian levels. Note that the analysis leading up to (\ref{boundonlambdamin}) and the rank considerations above are not meaningful for infinite-dimensional reservoirs: if $\|H\|<\infty$, then the thermal state does not exist in infinite dimensions for $\beta\in[0,\infty)$ (cf.\ also Appendix \ref{sectioninfinitedim}); and when $\|H\|=\infty$, the bound (\ref{boundonlambdamin}) becomes trivial.

In cases where it is sufficient to reach a final state $\widetilde{\rho}'_S$ that is only $\delta$-close to the desired final state $\rho'_S$, i.e.\ $\|\widetilde{\rho}'_S-\rho'_S\|_1\leq\delta$, the state $\widetilde{\rho}'_S$ can be chosen to be of full rank whenever $\delta>0$. Then, from Section \ref{attainingsection} (Proposition \ref{non-rank-decreasing-prop}), one can explicitly construct a process with final state $\widetilde{\rho}'_S$ using a finite-dimensional reservoir and such that the heat dissipation $\beta\Delta Q$ is arbitrarily close to $\Delta\widetilde{S}:=S(\rho_S)-S(\widetilde{\rho}'_S)$. Note that, for given $\rho'_S$ and $\delta$, it is possible to minimize $\Delta\widetilde{S}$ by analytical methods, i.e.\ to maximize $S(\widetilde{\rho}'_S)$ subject to the constraint $\|\widetilde{\rho}'_S-\rho'_S\|_1\leq\delta$, using the Kuhn-Tucker conditions \cite{convexoptimization}. As $S(\widetilde{\rho}'_S)\geq S(\rho'_S)$ for the optimal $\widetilde{\rho}'_S$, the heat expenditure $\beta\Delta Q$ in such a process can be made arbitrarily close to $\Delta S=S(\rho_S)-S(\rho'_S)$ or smaller. Note that our impossibility results differ from the one in \cite{ticozziviola}, where it is investigated whether \emph{for all} initial states $\rho_S$ the output $\tr{U(\rho_S\otimes\rho_R)U^\dagger}$ can be $\delta$-close to a fixed pure state.

Rather than by the smallest eigenvalue, the idea behind Proposition \ref{propositionlambdamin} can be formalized via majorization theory and entropies (again in finite dimensions). Namely, when the initial states $\rho_S$, $\rho_R$  (or just their spectra) are fixed, then one of the possible final system states $\rho'_S={\rm tr}_R[U(\rho_S\otimes\rho_R)U^\dagger]$ majorizes any other such state obtained by varying $U$. The spectrum of this maximal (``purest'') state, which is unique up to unitary equivalence, is obtained by listing the $d_Sd$ eigenvalues of $\rho_S\otimes\rho_R$ in increasing order and repeatedly summing $d$ successive ones, starting from the lowest. This state has also minimal entropy $S(\rho'_S)$ among all possible final system states \cite{uhlmannnincreaseS}, but its entropy is nonzero iff $\rho_S\otimes\rho_R$ has more than $d$ nonzero eigenvalues; in particular, it is nonzero whenever $\|\beta H\|<\infty$ and $S(\rho_S)>0$.

A few treatments of Landauer's Principle in the literature do not require a pure final system state $\rho'_S$, but do assume a product final state $\rho'_{SR}=\rho'_S\otimes\rho'_R$ (such a product state would of course be implied by a pure $\rho'_S$); cf.\ e.g.\ some parts of \cite{piechocinska} (see also Section \ref{subsectpreworks}). Similar to the pure final state discussed above, also this product final state assumption is generally not achievable: A generic product state $\rho_{SR}=\rho_S\otimes\rho_R$ will admit only \emph{one} tensor product decomposition (\emph{two} when the dimensions $d_S=d$ match). Thus, the condition $\rho'_S\otimes\rho'_R=U(\rho_S\otimes\rho_R)U^\dagger$ for generic $\rho_S$, $\rho_R$ implies $U=U_S\otimes U_R$ with unitaries $U_S$, $U_R$ and so allows only trivial processes with no entropy change as $\rho'_S=U_S\rho_S U_S^\dagger$ (or, additionally, $U={\mathbb{F}}_{SR}(U_S\otimes U_R)$  in the case $d=d_S$, with the swap operator ${\mathbb{F}}_{SR}$; cf.\ Example \ref{swapexample}).

\section{Finite-size corrections to the Landauer bound}\label{finitesizesect}
The strengthened form of Landauer's Principle (Theorem \ref{landauereqntheorem}) showed that Landauer's bound $\beta\Delta Q\geq\Delta S$ is sharp only in quite trivial cases (Corollary \ref{correqualityL}). It can therefore be improved in all interesting cases. Of course, the tightest improvement is given by the equality version (\ref{landauereqnshort}), but this contains the quantities $I(S':R')$ and $D(\rho'_R\|\rho_R)$ which are usually not available as they would for example require knowledge of the full global state $\rho'_{SR}$.

In this section we derive improvements of Landauer's bound that are \emph{explicit} in the sense that they depend on the quantity $\Delta S$ that does already appear in the inequality $\beta\Delta Q\geq\Delta S$. In fact, the new bounds have to depend on the reservoir dimension $d$ as well, because processes can approach Landauer's bound $\beta\Delta Q\geq\Delta S$ in the limit $d\to\infty$ (see Section \ref{attainingsection}). The inequalities we prove in the present section thus constitute \emph{finite-size corrections} to Landauer's bound.

\bigskip

Our main result on finite-size improvements uses the following auxiliary quantities \cite{inequalityarxivpaper}:
\begin{align}
N(d)~&:=~\max_{0<r<1/2}r(1-r)\left(\log\frac{1-r}{r}(d-1)\right)^2~,\label{defineNdinproofpaper}\\
M(x,d)~&:=~\min_{0\leq s,r\leq(d-1)/d}\,\left\{\binrel(s\|r)\,\big|\,\binH(s)-\binH(r)+(s-r)\log(d-1)=x\,\right\}~,\label{definefunctionM}
\end{align}
where $2\leq d<\infty$ and $x\in[-\log d,\log d]$, and with the binary entropy $\binH(s):=S({\rm{diag}}(s,1-s))$ and the binary relative entropy $\binrel(s\|r):=D({\rm{diag}}(s,1-s)\|{\rm{diag}}(r,1-r))$. To get a better understanding of these quantities for our following main result, we remark that it follows from \cite{inequalityarxivpaper} (esp.\ Remark 4 and Lemma 14 therein) that $N(d)=\frac{1}{4}\log^2d+O(1)$ (as $d\to\infty$) and
\begin{align}\label{asymptoticM}
M(x,d)~=~\left\{\begin{array}{ll}\frac{x^2}{2N(d)}\,+\,O(x^3)&\text{as}~x\to0~(\text{for any fixed}~d\geq2)\,,\\\frac{2x^2}{\log^2d}\,+\,O\big(\frac{1}{\log^4d}\big)&\text{as}~d\to\infty~(\text{for any fixed}~x\in\R)\,.\end{array}\right.
\end{align}
Note that the quantity $(\log d)$ can be interpreted roughly as the number of particles in the reservoir $R$.

\begin{theorem}[Explicit finite-size improvements of Landauer's Principle]\label{maintheoremcombined}
Consider processes as described in Theorem \ref{landauereqntheorem}. If the reservoir dimension satisfies $2\leq d<\infty$, then
\begin{align}\label{inequalitymaintheoremfinited}
\beta\Delta Q~\geq~\left\{\begin{array}{ll}\Delta S\,+\,M(\Delta S,d)\,~\geq\,~\Delta S+\frac{(\Delta S)^2}{2N}&\quad\text{if}\,~\Delta S\geq0\\\ \\\Delta S\,+\,\left[N-\Delta S-\sqrt{N^2-2N\Delta S}\right]&\quad\text{if}\,~\Delta S\leq0\end{array}\right\}~\geq~\Delta S~,
\end{align}
for any $N\geq N(d)$ with $N(d)$ from Eq.\ (\ref{defineNdinproofpaper}); for example $N=\frac{1}{4}\log^2(d-1)+1$ or $N=\log^2d$. The function $M(\Delta S,d)$ is defined in Eq.\ (\ref{definefunctionM}). If $d=1$, then $\Delta Q=\Delta S=0$.
\end{theorem}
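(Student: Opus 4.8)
The plan is to feed the exact identity of Theorem~\ref{landauereqntheorem} into the tight relative-entropy inequality of \cite{inequalityarxivpaper} and then close with a short one-variable estimate. First I would reduce the claim. Combining the equality form \eqref{landauereqnshort} with the Second Law Lemma in the form \eqref{shorteqnformof2law} gives $\beta\Delta Q=\Delta+D(\rho'_R\|\rho_R)$ with $\Delta=S(\rho'_R)-S(\rho_R)$; since $I(S':R')\geq0$ the same relation forces $\Delta\geq\Delta S$, and trivially $\Delta\in[-\log d,\log d]$ because $\rho'_R,\rho_R$ live on $\C^d$. So everything reduces to lower bounding $\Delta+D(\rho'_R\|\rho_R)$ using only $\Delta\geq\Delta S$.

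The input inequality, from \cite{inequalityarxivpaper}, is that $D(\sigma\|\rho)\geq M\big(S(\sigma)-S(\rho),d\big)$ for any two states on a $d$-dimensional space, with $M$ as in \eqref{definefunctionM}; applied to $\rho'_R,\rho_R$ it gives $D(\rho'_R\|\rho_R)\geq M(\Delta,d)$, hence
\begin{equation}
\beta\Delta Q~\geq~\Delta+M(\Delta,d)~\geq~\Delta+\frac{\Delta^2}{2N}~.
\end{equation}
All further facts about $M$ and $N(d)$ I would use—the monotonicity of $M(\cdot,d)$ on $[0,\log d]$, the quadratic bound $M(x,d)\geq x^2/(2N(d))$ on $[-\log d,\log d]$ used in the last step, and the estimates $N(d)\leq\tfrac14\log^2(d-1)+1$ and $N(d)\leq\log^2d$—are taken from \cite{inequalityarxivpaper} (especially Remark~4 and Lemma~14 therein). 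Since the asserted bound only weakens as $N$ grows, it is enough to prove it for $N=N(d)$.

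For the one-variable step I would split on the sign of $\Delta S$. If $\Delta S\geq0$, then $\Delta\in[\Delta S,\log d]\subseteq[0,\log d]$, where $\delta\mapsto\delta+M(\delta,d)$ is non-decreasing, so $\beta\Delta Q\geq\Delta S+M(\Delta S,d)\geq\Delta S+(\Delta S)^2/(2N)$, which is the first line of \eqref{inequalitymaintheoremfinited}. If $\Delta S\leq0$, I would drop $M$ to its quadratic bound and minimise the parabola $\phi(\delta):=\delta+\delta^2/(2N)$ over $\delta\geq\Delta S$; its vertex is at $\delta=-N$, so the minimum is $\phi(\Delta S)=\Delta S+(\Delta S)^2/(2N)$ when $\Delta S\geq-N$ and $\phi(-N)=-N/2$ when $\Delta S<-N$. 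Both are $\geq N-\sqrt{N^2-2N\Delta S}$: for $\Delta S<-N$ this reads $-N/2\geq N-\sqrt{N^2-2N\Delta S}$, i.e.\ $\Delta S\leq-5N/8$, which holds; for $-N\leq\Delta S\leq0$ the quantity $N-\Delta S-(\Delta S)^2/(2N)$ is positive, and squaring reduces the inequality to the identity $(N^2-2N\Delta S)-\big(N-\Delta S-(\Delta S)^2/(2N)\big)^2=-\tfrac{(\Delta S)^3}{N}\big(1+\tfrac{\Delta S}{4N}\big)\geq0$. The closed form in the second line of \eqref{inequalitymaintheoremfinited} equals $N-\sqrt{N^2-2N\Delta S}$, which is precisely the smaller root of $y^2-2Ny+2N\Delta S=0$ (equivalently the smaller solution of $y-\Delta S=y^2/(2N)$). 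Finally each right-hand side is $\geq\Delta S$ because $M\geq0$ and $N^2-2N\Delta S\leq(N-\Delta S)^2$; and $d=1$ forces $U=U_S\otimes\1$, hence $\rho'_S=U_S\rho_SU_S^\dagger$ and $\Delta S=\Delta Q=0$.

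The only genuinely substantive input here is imported: the main obstacle is the tight inequality $D(\sigma\|\rho)\geq M(S(\sigma)-S(\rho),d)$ together with the quantitative properties of $M$ and $N(d)$ from \cite{inequalityarxivpaper}, without which one recovers only the dimension-independent (here trivial) bound $\beta\Delta Q\geq\Delta S$. Inside the present argument the fiddliest point is the case $\Delta S\leq0$: identifying the correct minimum of $\phi$ on $\delta\geq\Delta S$ and matching it to the stated square-root expression via the squared inequality above.
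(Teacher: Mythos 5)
Your case $\Delta S\geq0$ is exactly the paper's argument (Section~\ref{subsectDeltaSgeq0}), and your reduction to $N=N(d)$, the final comparison with $\Delta S$, and the $d=1$ remark are all fine. The genuine gap is in the case $\Delta S\leq0$: you feed the minimization of $\phi(\delta)=\delta+\delta^2/(2N)$ with the claim that $M(x,d)\geq x^2/(2N(d))$ holds on all of $[-\log d,\log d]$, attributing this to \cite{inequalityarxivpaper}. That quadratic bound is only available for $x\geq0$. What \cite{inequalityarxivpaper} (and Eq.~(\ref{generallowerboundsonM}) of the paper) provide is $M(x,d)\geq Ne^{x/N}-N-x\geq\frac{x^2}{2N}+\frac{x^3}{6N^2}$, and for $x<0$ the cubic term is negative, so the pure quadratic bound does not follow; in fact it is false. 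A concrete counterexample with $d=2$: take $\rho={\rm diag}(0.0832,\,0.9168)$ (so $r$ near the maximizer in Eq.~(\ref{defineNdinproofpaper}), $N(2)\approx0.439$) and $\sigma={\rm diag}(0.046,\,0.954)$; then $S(\sigma)-S(\rho)\approx-0.100$ while $D(\sigma\|\rho)\approx0.0107$, which is strictly below $(0.100)^2/(2N(2))\approx0.0114$. So $M(x,2)<x^2/(2N(2))$ for $x\approx-0.1$, and your pointwise input fails exactly where you need it ($\Delta$ may be negative when $\Delta S\leq0$). This is why the paper proves the $\Delta S\leq0$ branch by an entirely different route (Sections~\ref{subsectDeltaQbounds}--\ref{subsectDeltaSleq0}): it first bounds $D(\rho'_R\|\rho_R)\geq(\beta\Delta Q)^2/(2N(d))$ for $\beta\Delta Q\leq0$ using the heat-capacity bound $\var_\gamma(\gamma H)\leq N(d)$ (Theorem~\ref{sharpenLineqforDeltaQleq0tm}), and then solves the resulting quadratic inequality in $\beta\Delta Q$.

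Your strategy can, however, be repaired without the paper's thermodynamic detour, by using the exponential rather than the quadratic member of (\ref{generallowerboundsonM}): from $\beta\Delta Q\geq\Delta+M(\Delta,d)\geq Ne^{\Delta/N}-N$, which is increasing in $\Delta$, and $\Delta\geq\Delta S$, one gets $\beta\Delta Q\geq Ne^{\Delta S/N}-N$. Writing $u=\Delta S/N\leq0$, the elementary inequality $e^u+\sqrt{1-2u}\geq2$ (the left side equals $2$ at $u=0$ and is decreasing for $u\leq0$, since $e^{-2u}>1-2u$ gives $e^u<(1-2u)^{-1/2}$) yields $Ne^{\Delta S/N}-N\geq N-\sqrt{N^2-2N\Delta S}$, i.e.\ precisely the second selector in (\ref{inequalitymaintheoremfinited}); this is essentially the comparison the paper itself makes in Appendix~\ref{tightnessofexplicitimprovementsection}. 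So the intended shortcut works, but only with the $e^{x/N}$-form of the lower bound on $M$; as written, your key step for $\Delta S\leq0$ rests on an inequality that is not in the cited reference and is false for negative arguments.
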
We prove Theorem \ref{maintheoremcombined} in Section \ref{subsectDeltaSgeq0} (for $\Delta S\geq0$) and Section \ref{subsectDeltaSleq0} (for $\Delta S\leq0$). The main work for the latter case is done in Section \ref{subsectDeltaQbounds}. Note that our proofs for the two cases are quite different.

\begin{figure}
\centering
\includegraphics[trim=3.15cm 3.6cm 3.1cm 3.15cm, clip=true, scale=0.45]{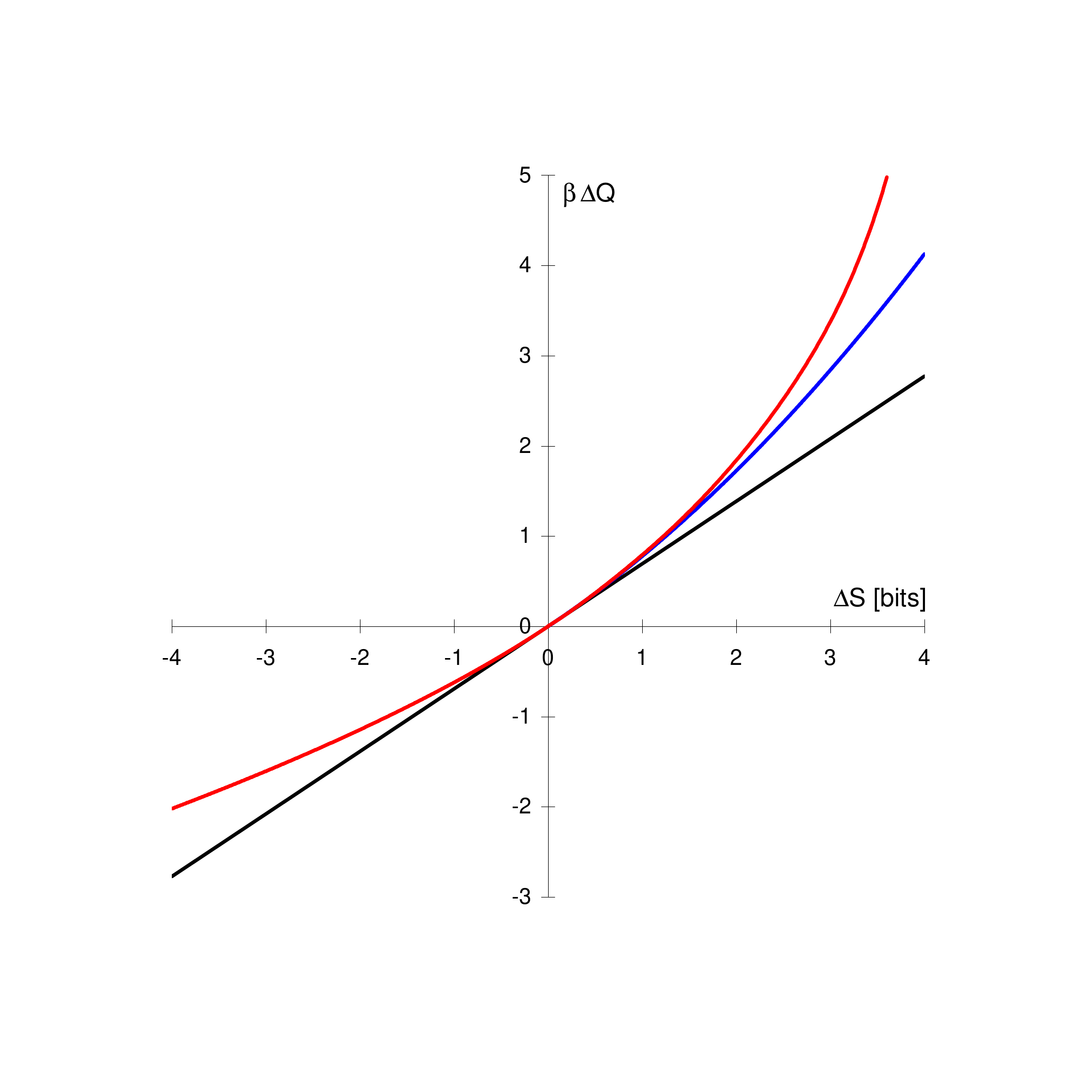}
\caption{\label{relentropygraphs}Comparison of lower bounds on the heat dissipation, for a reservoir consisting of $n=4$ qubits ($d=16$):\ Landauer's linear bound (\ref{Lineqineqthm}) is the black straight line. The red curve shows the best bounds from Eq.\ (\ref{inequalitymaintheoremfinited}) (approaching $+\infty$ as $\Delta S\to\log d$), whereas the blue curve shows the quadratic bound from Eq.\ (\ref{inequalitymaintheoremfinited}) for $\Delta S\geq0$. Points below a curve are excluded by the respective bound, i.e.\ cannot be achieved by any physical process.}
\end{figure}

The tightness of the bounds from Theorem \ref{maintheoremcombined} is investigated in Appendix \ref{tightnessofexplicitimprovementsection}. In particular we show that, for any given $d\geq2$ and $\Delta S\geq0$, the bound $\beta\Delta Q\geq\Delta S+M(\Delta S,d)$ is tight, whereas the bound (\ref{inequalitymaintheoremfinited}) for $\Delta S<0$ is not tight. Note however that at least the square brackets in (\ref{inequalitymaintheoremfinited}) is strictly positive whenever $\Delta S<0$, behaving like $(\Delta S)^2/2N+O((\Delta S)^3)$ as $\Delta S\to0$.

The lower bounds (\ref{inequalitymaintheoremfinited}) on the necessary heat dissipation $\beta\Delta Q$ are illustrated in Fig.\ \ref{relentropygraphs}. As can be seen, when erasing $1\,{\rm{bit}}$ of information assisted by a small reservoir (e.g.\ of $n=4$ qubits), then the minimum heat expenditure $\Delta Q$ necessary is tens of percent above the commonly assumed Landauer limit $\Delta S/\beta$.

\subsection{Improvement of Landauer's bound for $\Delta S\geq0$}\label{subsectDeltaSgeq0}
The main work here is accomplished by an entropy inequality proven in \cite{inequalityarxivpaper}. It gives a tight lower bound on the relative entropy $D(\sigma\|\rho)$ between any $d$-dimensional quantum states $\sigma$, $\rho$ in terms of their entropy difference $\Delta=S(\sigma)-S(\rho)$ and the dimension $d$:
\begin{align}\label{simplelookinglowerboundonrelativeentropy}
D(\sigma\|\rho)~\geq~M(\Delta,d)~,
\end{align}
where the function $M(x,d)$ is defined for $2\leq d<\infty$, $x\in[-\log d,\log d]$ in Eq.\ (\ref{definefunctionM}).

For each fixed $d$, the function $M(x,d)$ is strictly decreasing for $x\leq0$ and strictly increasing for $x\geq0$, strictly convex in $x\in[-\log d,\log d]$, and attains values $M(-\log d,d)=\log d$, $M(0,d)=0$, and $M(\log d,d)=\lim_{x\to\log d}M(x,d)=\infty$ \cite{inequalityarxivpaper}.

$M(x,d)$ can be easily computed numerically as an optimization over two bounded real variables, by its definition in Eq.\ (\ref{definefunctionM}). Furthermore, the lower bounds
\begin{align}\label{generallowerboundsonM}
M(x,d)~\geq~Ne^\frac{x}{N}-N-x~\geq~\frac{x^2}{2N}+\frac{x^3}{6N^2}~\geq~0~\qquad\forall\,x\in[-\log d,\log d]
\end{align}
hold for any $N\geq N(d)$, where $N(d)$ is defined in Eq.\ (\ref{defineNdinproofpaper}); one may for example choose $N=\frac{1}{4}\log^2(d-1)+1>N(d)$ or $N=\log^2d>N(d)$. See \cite{inequalityarxivpaper} for detailed proofs and discussion.

\bigskip

To prove (\ref{inequalitymaintheoremfinited}) in the case $\Delta S\geq0$, note that $\beta\Delta Q=\Delta+D(\rho'_R\|\rho_R)$ by the equality version of Landauer's Principle (Theorem \ref{landauereqntheorem}), where we also used $\Delta=\Delta S+I(S':R')$ by the Second Law Lemma (Lemma \ref{propsecondlaw}). The bound (\ref{simplelookinglowerboundonrelativeentropy}) gives then $\beta\Delta Q\geq\Delta+M(\Delta,d)$. Finally, Eq.\ (\ref{inequalitymaintheoremfinited}) follows for $\Delta S\geq0$ since $\Delta\geq\Delta S$ and $M(x,d)$ is monotonically increasing in $x\geq0$ \cite{inequalityarxivpaper}.

In Appendix \ref{tightnessofexplicitimprovementsection} we show that the derived inequality is tight in the sense that, for any fixed $d$ and any possible value of $\Delta S\geq0$, there exists a process, which attains equality in the bound $\beta\Delta Q\geq \Delta S+M(\Delta S,d)$.

\subsection{Improved Landauer bound depending on $\Delta Q$}\label{subsectDeltaQbounds}
The following derivation requires some notation and thermodynamics facts from Appendix \ref{thermodynamicappendix}, where these things are proven rigorously. We also use the presuppositions and notation from Theorem \ref{maintheoremcombined} (made explicit in the statement of Theorem \ref{landauereqntheorem}).

Denote by $E_R:=\tr{H\rho_R}=\tr{H\rho_\beta}=E(\beta)$ the initial energy of the reservoir, and denote by $\rho'_{R,th}=e^{-\beta' H}/\tr{e^{-\beta' H}}$ the thermal state of the same energy as $\rho'_R$, i.e.\ $\tr{H\rho'_R}=E_R+\Delta Q={\rm{tr}}[H\rho'_{R,th}]=E(\beta')$. Here, $\beta'\in[-\infty,+\infty]$ is uniquely determined if $H\not\propto\ii_R$ (Appendix \ref{thermodynamicappendix}), which we assume from now on; otherwise it would be $\Delta Q=0$, in which case the final result Theorem \ref{sharpenLineqforDeltaQleq0tm} follows directly from Theorem \ref{landauereqntheorem}.

As is easily seen by using the thermal form of $\rho_R$ and $\rho'_{R,th}$, the relative entropy term from (\ref{landauereqnshort}) can be rewritten as follows (this is ``Pythagoras' Theorem'' first noticed in \cite{csiszarpythagoras}):
\begin{align}\label{pythagoreantheoremforD}
D(\rho'_R\|\rho_R)~=~D(\rho'_R\|\rho'_{R,th})+D(\rho'_{R,th}\|\rho_R)~.
\end{align}
Note that the first term on the RHS is always finite, whereas the other two terms are always both finite or both infinite. Rewriting the last term due to thermality of $\rho_R$ and Lemma \ref{lemmaSEbeta} gives
\begin{align}
D(\rho'_R\|\rho_R)~&=~D(\rho'_R\|\rho'_{R,th})\,+\,\tr{(\rho'_{R,th}-\rho_R)(-\log\rho_R)}-\left[S(\rho'_{R,th})-S(\rho_R)\right]\label{firstlinewithaddends}\\
&=~D(\rho'_R\|\rho'_{R,th})\,+\,\beta\Delta Q\,-\,\int_{E_R}^{E_R+\Delta Q}\beta(E)\,dE~.\label{firsttimeintegralappears}
\end{align}
Except possibly for the term ${\rm{tr}}[\rho'_{R,th}(-\log\rho_R)]$ (and consequently $\beta\Delta Q$) in case $\beta=\pm\infty$, all addends in (\ref{firstlinewithaddends})--(\ref{firsttimeintegralappears}) are finite (even though the integrand can diverge at the boundaries when $\beta,\beta'=\pm\infty$). Treating this case with the usual conventions, we can continue:
\begin{align}
D(\rho'_R\|\rho_R)~
&=~D(\rho'_R\|\rho'_{R,th})\,+\,\int_{E_R}^{E_R+\Delta Q}\left(\beta(E_R)-\beta(E)\right)\,dE\\
&=~D(\rho'_R\|\rho'_{R,th})\,+\,\int_{E_R}^{E_R+\Delta Q}\left(-\int_{E_R}^E\frac{d\beta(E')}{dE'}\,dE'\right)\,dE\\
&=~D(\rho'_R\|\rho'_{R,th})\,+\,\int_{E_R}^{E_R+\Delta Q}\int_{E_R}^E\frac{1}{\var_{\beta(E')}(H)}\,dE'\,dE\label{lastequationincontinuousderviationwithequality}~,
\end{align}
where in the last step we used Eq.~(\ref{dbetadE}).

Notice that always $\beta(E')\in[\beta,\beta']$ in the last integral, where we understand the notation $[\beta,\beta']$ for $\beta'<\beta$ to mean the interval $[\beta',\beta]$. We can thus (even in the case $\Delta Q<0$) put a lower bound on the double integral by replacing the denominator by $\max_{\gamma\in[\beta,\beta']}\var_\gamma(H)$. Furthermore dropping the relative entropy term gives
\begin{equation}
D(\rho'_R\|\rho_R)~\geq~\int_{E_R}^{E_R+\Delta Q}\int_{E_R}^E\frac{1}{\max_{\gamma\in[\beta,\beta']}\var_\gamma(H)}\,dE'\,dE~=~\frac{(\Delta Q)^2}{2\max_{\gamma\in[\beta,\beta']}\var_\gamma(H)}~.\label{lowerboundonDwithDeltaQsquared}
\end{equation}

\bigskip

We aim for a lower bound on $D(\rho'_R\|\rho_R)$ that involves the quantity $\beta\Delta Q$, which already appears in the usual Landauer bound, rather than $\Delta Q$ alone; at the same time we would like to eliminate the complicated expression in the denominator, which resembles a heat capacity (cf.\ (\ref{heatcapacitybeta}) and (\ref{heatcapacityT}) in Appendix \ref{thermodynamicappendix}). To do this, assume first $\beta\in(-\infty,+\infty)\setminus\{0\}$ to get
\begin{align}
D(\rho'_R\|\rho_R)~\geq~\frac{(\beta\Delta Q)^2}{2\max_{\gamma\in[\beta,\beta']}\beta^2\var_\gamma(H)}~.\label{aftererweiternmitbeta}
\end{align}

If $\Delta Q\leq0$, then $\beta'\geq\gamma\geq\beta$, since the energy is strictly decreasing with the inverse temperature. Thus, if $\Delta Q\leq0$ and $\beta>0$, the denominator in (\ref{aftererweiternmitbeta}) can be upper bounded by $2\max_{\gamma\in[\beta,\beta']}\gamma^2\var_\gamma(H)$. The same holds for $\beta<0$ and $\Delta Q\geq0$, since then we have $\beta'\leq\gamma\leq\beta<0$. If $\beta\Delta Q\leq0$ and $\beta\in(-\infty,+\infty)\setminus\{0\}$, we thus have
\begin{align}
D(\rho'_R\|\rho_R)~\geq~\frac{(\beta\Delta Q)^2}{2\max_{\gamma\in[\beta,\beta']}\var_\gamma(\gamma H)}~.\label{aftererweiternmitgamma}
\end{align}
When $\gamma=\pm\infty$, the expression $\var_\gamma(\gamma H)$ should be interpreted as $0$, since $\beta^2\var_{\gamma=\pm\infty}(H)=0$ in (\ref{aftererweiternmitbeta}) due to $\beta\neq\pm\infty$.

The key observation is now that $\var_\gamma(\gamma H)$ can be upper bounded \emph{independently} of the Hamiltonian $H$ or the inverse temperature $\gamma$ just as a function of the reservoir dimension $d$ (see (\ref{heatcapacityT}) and below). In fact, the following tight bound was proven in \cite{inequalityarxivpaper}
\begin{align}\label{upperboundonvargamma}
\var_\gamma(\gamma H)~\leq~N(d)~<~\frac{1}{4}\log^2(d-1)+1~\qquad\text{for}~2\leq d<\infty~,
\end{align}
where $N(d)$ is defined in (\ref{defineNdinproofpaper}). This also holds for $\gamma=\pm\infty$, due to the convention from the previous paragraph; to see this, note that the LHS of (\ref{upperboundonvargamma}) is written in \cite{inequalityarxivpaper} as $\var_{\rho_\gamma}(\log\rho_\gamma)$, and this equals $0$ when $\rho_\gamma$ is maximally mixed on its support subspace, in particular for $\gamma=\pm\infty$.

We thus have:
\begin{align}\label{simplelowerboundonbetaDwithNdandDeltaQ}
D(\rho'_R\|\rho_R)~\geq~\frac{(\beta\Delta Q)^2}{2N(d)}~\qquad\text{if}~\,\beta\Delta Q\leq0~.
\end{align}
This statement holds also for $\beta\in\{-\infty,0,+\infty\}$, which was not included in the above derivation. This is because $\beta=+\infty$ necessitates $\Delta Q\geq0$, which together with the condition $\beta\Delta Q\leq0$ enforces $\Delta Q=0$, so that the numerator in (\ref{simplelowerboundonbetaDwithNdandDeltaQ}) vanishes and the inequality holds. Similarly, the numerator vanishes for $\beta=-\infty,0$. Note that the RHS of (\ref{simplelowerboundonbetaDwithNdandDeltaQ}) does never diverge as long as $\beta\Delta Q\leq0$, because of $\beta\Delta Q\geq\Delta S+I(S':R')=\Delta\geq-\log d$ due to (\ref{landauereqnshort}) and (\ref{shorteqnformof2law}).

\bigskip

Using Eq.\ (\ref{simplelowerboundonbetaDwithNdandDeltaQ}) in Theorem \ref{landauereqntheorem}, and also using Lemma \ref{propsecondlaw}, we finally arrive at:
\begin{theorem}[Sharpening of Landauer's bound for $\beta\Delta Q\leq0$]\label{sharpenLineqforDeltaQleq0tm}Consider processes as described in Theorem \ref{landauereqntheorem}, with the reservoir dimension satisfying $2\leq d<\infty$. If the initial inverse temperature $\beta\in[-\infty,+\infty]$ and the heat dissipation $\Delta Q$ satisfy $\beta\Delta Q\leq0$, then:
\begin{align}
\Delta S~\leq~\Delta S+I(S':R')~=~\Delta~\leq~\beta\Delta Q-\frac{(\beta\Delta Q)^2}{2N(d)}~\qquad\text{if}\,~\beta\Delta Q\leq0~,\label{inequalityinDeltaQleq0thm}
\end{align}
where $N(d)$ is defined in Eq.\ (\ref{defineNdinproofpaper}).
\end{theorem}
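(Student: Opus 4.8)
The plan is to assemble Theorem~\ref{sharpenLineqforDeltaQleq0tm} directly from the equality version of Landauer's Principle (Theorem~\ref{landauereqntheorem}), the Second Law Lemma (Lemma~\ref{propsecondlaw}), and the lower bound on the relative entropy established in Eq.~(\ref{simplelowerboundonbetaDwithNdandDeltaQ}). All the genuinely new work has already been done in the derivation of Eq.~(\ref{simplelowerboundonbetaDwithNdandDeltaQ}) --- namely the ``Pythagoras'' decomposition (\ref{pythagoreantheoremforD}), the double-integral representation (\ref{lastequationincontinuousderviationwithequality}) via the thermodynamic facts of Appendix~\ref{thermodynamicappendix}, the sign analysis that lets one replace $1/\var_{\beta(E')}(H)$ by $1/\max_{\gamma\in[\beta,\beta']}\var_\gamma(\gamma H)$ when $\beta\Delta Q\le 0$, and the Hamiltonian-independent bound $\var_\gamma(\gamma H)\le N(d)$ from Eq.~(\ref{upperboundonvargamma}). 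So the theorem is essentially a packaging statement, and the proof will be short.

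First I would recall from Theorem~\ref{landauereqntheorem} that $\beta\Delta Q=\Delta S+I(S':R')+D(\rho'_R\|\rho_R)$, and from Lemma~\ref{propsecondlaw} that $\Delta=\Delta S+I(S':R')$, so that $\beta\Delta Q=\Delta+D(\rho'_R\|\rho_R)$, i.e.\ $\Delta=\beta\Delta Q-D(\rho'_R\|\rho_R)$. Since the mutual information is non-negative, $\Delta S\le\Delta$, giving the first inequality in (\ref{inequalityinDeltaQleq0thm}) and the stated equality $\Delta S+I(S':R')=\Delta$. Then, under the hypothesis $\beta\Delta Q\le 0$, I would invoke Eq.~(\ref{simplelowerboundonbetaDwithNdandDeltaQ}), $D(\rho'_R\|\rho_R)\ge(\beta\Delta Q)^2/(2N(d))$, and substitute into $\Delta=\beta\Delta Q-D(\rho'_R\|\rho_R)$ to obtain $\Delta\le\beta\Delta Q-(\beta\Delta Q)^2/(2N(d))$, which is the remaining inequality. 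I would also note, as already observed just below Eq.~(\ref{simplelowerboundonbetaDwithNdandDeltaQ}), that the RHS is finite because $\beta\Delta Q=\Delta\ge-\log d$ in finite dimensions, so no divergence issue arises.

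The only point requiring a little care --- and the one I would flag as the main obstacle, modest as it is --- is the handling of the boundary inverse temperatures $\beta\in\{-\infty,0,+\infty\}$ and the cases where $\beta'=\pm\infty$, since the continuous derivation leading to (\ref{lastequationincontinuousderviationwithequality})--(\ref{aftererweiternmitgamma}) tacitly assumes $\beta\in(-\infty,+\infty)\setminus\{0\}$ and the integrand $1/\var_{\beta(E')}(H)$ can blow up at the endpoints. For $\beta=+\infty$ the constraint $\Delta Q\ge 0$ combined with $\beta\Delta Q\le 0$ forces $\Delta Q=0$, hence $\beta\Delta Q=0$ and both sides of the target inequality vanish; analogously for $\beta=-\infty$ and $\beta=0$ the quantity $\beta\Delta Q$ vanishes. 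For finite $\beta$ with $\beta'=\pm\infty$, the conventions fixed around Eq.~(\ref{aftererweiternmitgamma}) (interpreting $\var_{\gamma=\pm\infty}(\gamma H)=0$ and using that $\var_{\rho_\gamma}(\log\rho_\gamma)=0$ for maximally mixed $\rho_\gamma$) keep the bound (\ref{upperboundonvargamma}) valid, and Appendix~\ref{thermodynamicappendix} guarantees the double integral and the substitution are legitimate even when the integrand diverges at a boundary. Collecting these edge cases, the inequality (\ref{inequalityinDeltaQleq0thm}) holds for all $\beta\in[-\infty,+\infty]$ whenever $\beta\Delta Q\le 0$, which completes the proof.
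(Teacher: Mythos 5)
Your proof is correct and takes essentially the same route as the paper, which likewise obtains the theorem by combining $\beta\Delta Q=\Delta+D(\rho'_R\|\rho_R)$ (Theorem \ref{landauereqntheorem} together with Lemma \ref{propsecondlaw}) with the bound (\ref{simplelowerboundonbetaDwithNdandDeltaQ}), whose derivation --- including the boundary temperatures $\beta\in\{-\infty,0,+\infty\}$ and the conventions for $\beta'=\pm\infty$ --- is exactly the content of Section \ref{subsectDeltaQbounds}. One small imprecision: in those boundary cases only the right-hand side of (\ref{inequalityinDeltaQleq0thm}) vanishes, while $\Delta=-D(\rho'_R\|\rho_R)\leq 0$ need not be zero; the inequality of course still holds, so this does not affect the argument.
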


The right inequality in (\ref{inequalityinDeltaQleq0thm}) is generally wrong if one does not demand $\beta\Delta Q\leq0$, because for any $d\geq2$ it is easy to construct Hamiltonians $H$ such that $\beta\Delta Q$ becomes arbitrarily large (positive, but finite), so that the RHS in (\ref{inequalityinDeltaQleq0thm}) becomes arbitrarily negative, whereas $\Delta\geq-\log d$ is bounded from below.

The derivation leading up to Theorem \ref{sharpenLineqforDeltaQleq0tm} shows how more detailed knowlege about the reservoir (i.e.\ about the temperature, the Hamiltonian, or its heat capacity) could be exploited, when available, to obtain better bounds than (\ref{simplelowerboundonbetaDwithNdandDeltaQ}) or (\ref{inequalityinDeltaQleq0thm}). With knowledge of only the reservoir dimension $d$, however, the essential bound (\ref{upperboundonvargamma}) is tight \cite{inequalityarxivpaper}. Bounds similar to (\ref{simplelowerboundonbetaDwithNdandDeltaQ}) or (\ref{inequalityinDeltaQleq0thm}) are possible also in the case $\beta\Delta Q>0$ if one for example has a lower bound on $|\beta'|$, i.e.\ if one knows by how much the temperature can rise at most by the addition of the heat amount $\Delta Q$.

\subsection{Improvement of Landauer's bound for $\Delta S\leq0$}\label{subsectDeltaSleq0}
Landauer's bound $\beta\Delta Q\geq\Delta S$ does not forbid values of $\beta\Delta Q$ close to $\Delta S$ (see Eq.\ (\ref{Lineqineqthm})). In the case $\Delta S<0$ it thus ``allows'' some negative values of $\beta\Delta Q$. But then Theorem \ref{sharpenLineqforDeltaQleq0tm} gives new constraints and we will use these to prove (\ref{inequalitymaintheoremfinited}) in the case $\Delta S\leq0$. Assume therefore a process with $\Delta S\leq0$ throughout this section.

If $\beta\Delta Q\geq0$, then the inequality (\ref{inequalitymaintheoremfinited}) holds since
\begin{align}
\Delta S+\left[N-\Delta S-\sqrt{N^2-2N\Delta S}\right]~=~N-\sqrt{N^2-2N\Delta S}~\leq~0~\leq~\beta\Delta Q~,
\end{align}
due to $\Delta S\leq0$ and $N\geq N(d)\geq0$.

Assume therefore now $\beta\Delta Q<0$ (as noted below (\ref{simplelowerboundonbetaDwithNdandDeltaQ}), it is $\beta\Delta Q>-\infty$ always). In this case, we use Theorem \ref{sharpenLineqforDeltaQleq0tm},
\begin{equation}
\Delta~\leq~\beta\Delta Q-\frac{(\beta\Delta Q)^2}{2N(d)}~,\label{intermediateeqninDeltaSleq0proof}
\end{equation}
multiply this by $2N(d)$, and rearrange to get
\begin{equation}
\left(N(d)-\beta\Delta Q\right)^2~\leq~N(d)^2-2N(d)\Delta~.
\end{equation}
This implies $\beta\Delta Q\geq N(d)-\sqrt{N(d)^2-2N(d)\Delta}$ and, via $\Delta S\leq\Delta$ due to Lemma \ref{propsecondlaw},
\begin{align}
\beta\Delta Q~\geq~N(d)-\sqrt{N(d)^2-2N(d)\Delta S}~.
\end{align}

The last expression only decreases when $N(d)$ is replaced by any $N\geq N(d)$, as one verifies easily. This finally proves inequality (\ref{inequalitymaintheoremfinited}) in the case $\Delta S\leq0$.

\section{Landauer processes involving correlations}\label{extendednotionssection}

More general than in Section \ref{setupsubsection}, we consider in Section \ref{memorysection} a setup where initial correlations may be used during the process. In Section \ref{correlationsubsection} we ask for thermodynamic constraints on the erasure of correlations themselves (rather than entropy). Further extensions of the basic setup are described in Appendix \ref{extendednotionssectionapp}.

\subsection{Landauer processes with memory and noisy operations}\label{memorysection}
More generally than in the setup from Section \ref{setupsubsection}, the agent who aims to modify (e.g.\ to ``erase'') the system's initial state $\rho_S$ may have some information about the actual microstate (e.g.\ pure state) of the system $S$. In this case, the desired process may be accomplished with less heat expenditure than given by naive application of Landauer's bound $\beta\Delta Q\geq\Delta S$ (see e.g.\ \cite{renneroneshotqsideinfo}). Formally, this additional knowledge can be described through an additional memory system $M$ that may initially be correlated with the system $S$, and such that the unitary $U$ may now act jointly on all three systems $S$, $R$, and $M$.

For example, when the system state is $\rho_S=\sum_ip_i\ket{i}_S\bra{i}$ (with orthonormal states $\{\ket{i}_S\}$) and the agent had perfect classical knowledge about the microstate $\ket{i}$ of $S$, the situation would be described by $\rho_{SM}=\sum_ip_i\ket{i}_S\bra{i}\otimes\ket{i}_M\bra{i}$, whereas perfect quantum correlation would correspond to a pure (entangled) initial state $\ket{\psi}_{SM}=\sum_i\sqrt{p_i}\ket{i}_S\ket{i}_M$ of system and memory. In both examples, if the process is a unitary $U=U_{SM}\otimes\ii_R$ acting non-trivally only on $SM$ in such a way that $U_{SM}\left(\ket{i}_S\ket{i}_M\right)=\ket{\psi}_S\ket{i}_M$ (with any fixed pure state $\ket{\psi}_S$), then one easily verifies
\begin{align}\label{changeofstates}
\rho'_{SRM}~=~\big(U_{SM}\otimes\ii_R\big)\rho_{SRM}\big(U^\dagger_{SM}\otimes \ii_R\big)~=~\ket{\psi}_S\bra{\psi}\otimes\rho'_{RM}\qquad\text{with}~\,\rho'_R=\rho_R~,
\end{align}
i.e.\ the information from $S$ is completely erased ($S(\rho'_S)=0$), whereas $R$ remains unchanged (in the first example above with initially perfect classical correlations also $M$ remains unchanged, $\rho'_{RM}=\rho_{RM}$); in particular, no entropy or heat increase occurs in the reservoir, $\Delta=\Delta Q=0$. This seems to contradict the Second Law Lemma (Eq.\ (\ref{shorteqnformof2law})) and Landauer's bound (Eq.\ (\ref{Lineqineqthm})), but is of course due to the initial correlations with $M$ that the process $U$ can access.

Further extending the setup from Section \ref{setupsubsection}, instead of only unitary interactions $U$ (Eq.\ (\ref{unitaryevolsetupsect})), one may allow for so-called ``noisy operations'' \cite{noiseoperation}, i.e.\ unitaries using an additional completely mixed ancilla system, or more generally any unital quantum channel $T$. For this, we use that a unital positive and trace-preserving map $T$ does not decrease the entropy \cite{uhlmannnincreaseS}.

\bigskip

The above points motivate the following setup, which extends the one from Section \ref{setupsubsection} and to which we can easily generalize our treatment:
\begin{enumerate}
\item[(a')]\vspace{-0.15cm}the system $S$, reservoir $R$, and memory $M$ are initially in a joint quantum state $\rho_{SRM}$,
\item[(b')]\vspace{-0.15cm}the initial reduced reservoir state $\rho_R={\rm tr}_{SM}\left[\rho_{SRM}\right]$ is thermal, $\rho_R=e^{-\beta H}/\tr{e^{-\beta H}}$,
\item[(c')]\vspace{-0.15cm}the process proceeds by a unital positive trace-preserving map $T$, i.e.\ $\rho'_{SRM}=T(\rho_{SRM})$,
\item[(d')]\vspace{-0.15cm}the entropy and heat changes $\Delta S$, $\Delta$, $\Delta Q$ are defined on the marginal states as in Fig.\ \ref{figsetup}.
\end{enumerate}

\bigskip

A modified Second Law Lemma (cf.\ Lemma \ref{propsecondlaw}) for this more general situation is then immediately verfied:
\begin{align}
\Delta~&=~S(R')-S(R)~=~\left[S(SRM)-S(R)\right]-\left[S(SRM)-S(R')\right]\nonumber\\
&\geq~\left[S(SRM)-S(R)\right]-\left[S(S'R'M')-S(R')\right]\label{ineqduetounitaleqn}\\
&=~\left[S(SM)-I(SM:R)\right]-\left[S(S'M')-I(S'M':R')\right]\nonumber\\
&=~\left[S(S|M)-S(S'|M')\right]\,+\,I(S'M':R')\,+\,\left[S(M)-S(M')\right]\,-\,I(SM:R)~,\label{verygeneralsecondlawwithmemory}
\end{align}
where $S(S|M)$ from (\ref{defineconditionalentropy}) is the entropy of $S$ conditioned on $M$. The inequality in (\ref{ineqduetounitaleqn}) is due to $S(\rho'_{SRM})=S(T(\rho_{SRM}))\geq S(\rho_{SRM})$ \cite{uhlmannnincreaseS} and will be an equality if $T$ is unitary.

If one only considers processes where the memory register $M$ is not being altered (as e.g.\ in \cite{renneroneshotqsideinfo}), implying $S(M')\leq S(M)$, and where the reservoir was initially uncorrelated with the rest, $\rho_{SRM}=\rho_{SM}\otimes\rho_R$ (see Section \ref{setupsubsection}), then one still has
\begin{equation}\label{secondlawcondeqn}
\Delta~\geq~\Delta S_{cond}\,+\,I(S'M':R')~\geq~\Delta S_{cond}\qquad\text{with}~~\Delta S_{cond}:=S(S|M)-S(S'|M')~,
\end{equation}
similar to (\ref{shorteqnformof2law}). Intuitively it is clear that $\Delta\geq\Delta S_{cond}$ need not hold when either the memory $M$ takes on some of the entropy, i.e.\ when $S(M')>S(M)$, or when the initial total entropy was reduced due to correlations with $R$, i.e.\ when $I(SM:R)>0$; both possibilities constitute resources that may be exploited for more efficient processes. Note that the Second Law Lemma just outlined in (\ref{verygeneralsecondlawwithmemory})--(\ref{secondlawcondeqn}) does not require a thermal state $\rho_R$ nor a Hamiltonian $H$ for the reservoir; but when the reservoir is initially thermal (see condition (b) above) then it is natural to assume no initial reservoir correlations, $I(SM:R)=0$, see Section \ref{setupsubsection} and \cite{relativethermalization}.

Under the assumptions $\rho_{SRM}=\rho_{SM}\otimes\rho_R$ and $S(M')\leq S(M)$ (in addition to (a')--(d') above), one arrives thus at the following form of Landauer's Principle, generalizing Eq.\ (\ref{landauereqnshort}),
\begin{align}\label{landauersgeneralizedineq}
\beta\Delta Q~\geq~\Delta S_{cond}\,+\,I(S'M':R')\,+\,D(\rho'_R\|\rho_R)~\geq~\Delta S_{cond}~.
\end{align}
The proof is as in (\ref{firstlineinproofofLeq}), but now starting from (\ref{secondlawcondeqn}). All the finite-size improvements from Section \ref{finitesizesect} apply to this more general case with memory as well if only $\Delta S$ is replaced by $\Delta S_{cond}$.

One can evaluate all above statements for the two examples given around Eq.\ (\ref{changeofstates}). In both cases, $I(S'M':R')=I(SM:R)$, $S(M)=S(S)$, and $S(S'|M')=0$. Furthermore, for the classically correlated case the initial conditional entropy was $S(S|M)=0$ and the state of the memory did not change, $S(M')=S(M)=S(S)$, whereas in the case of maximal quantum correlations $S(S|M)=-S(S)$ is negative and the final memory state is pure $S(M')=0$. The latter case is the most interesting: The generalization (\ref{landauersgeneralizedineq}) of Landauer's Principle is not tight in this case, since the memory state was purified at the expense of the quantum correlations between $S$ and $M$; a subsequent unitary interaction between $M'$ and $R'$ may however reduce the reservoir energy to give in the end $\beta\Delta Q=-S(M)=-S(S)$ and $S(M'')=S(M)$.

As a final remark, if there is no memory system $M$ but possibly initial correlations in $SR$ \cite{alickicorrelations}, then (\ref{verygeneralsecondlawwithmemory}) can be written as
\begin{align}
\Delta~=~S(S)-S(S')+I(S':R')-I(S:R)~=~S(S|R)-S(S'|R')~\geq~\Delta S-I(S:R)~.
\end{align}
Now one can formulate a Landauer Principle in terms of the difference $S(S|R)-S(S'|R')$ rather than $\Delta S_{cond}$ as above; or alternatively, one can bound the mutual information term $I(S:R)$, which appears with the ``wrong'' sign, by more traditional quantities like the trace distance, $I(S:R)\leq\|\rho_{SR}-\rho_S\otimes\rho_R\|_1(\log d_S+\log d_R)$, and this gives corrections to the usual Landauer bound (similarly for the term $I(SM:R)$ in (\ref{verygeneralsecondlawwithmemory})). Using similar processes as above with $M$ (around Eq.\ (\ref{changeofstates})), one can see that for initially perfect classical or quantum correlations in $SR$, one can achieve $\Delta S=S(S)$ while still $\Delta Q=0$ (due to $\rho'_R=\rho_R$); this ``violation'' of Landauer's bound is of course explained by $\rho_{SR}\neq\rho_S\otimes\rho_R$, contrary to the assumption (\ref{productstateassumptioninsetup}).

\subsection{A Landauer Principle for correlations rather than entropies?}\label{correlationsubsection}
The common formulation of Landauer's Principle \cite{landauer} says that changing the information in a system (e.g.\ by ``erasing information'') puts constraints on the heat dissipated during the process. This statement is consistent with the mathematical content of Theorem \ref{landauereqntheorem} when the \emph{entropy} $S(\rho)$ is interpreted as the \emph{amount of information} in a system in state $\rho$, and thus $\Delta S=S(\rho_S)-S(\rho'_S)$ is interpreted as the decrease of information in $S$. Such an interpretation of entropy is substantiated by the fundamental theorems of asymptotic information theory \cite{shannonmathemtheorycommun,schumachercoding}.

This interpretation of entropy also corresponds to the situation where the system has been prepared by someone in any one of the (orthonormal, and thus perfectly distinguishable) pure states $\ket{i}\bra{i}$ according to the distribution $\left\{p_i\right\}_i$, such that however the index $i$ is unknown to a second agent (who thus describes the system state as $\rho=\sum_ip_i\ket{i}\bra{i}$; see \cite{bennettnotes03} for further discussion). In this sense, data or information is contained \emph{in} the system and may be retrieved by the second agent through a measurement in the basis $\left\{\ket{i}\right\}_i$; this measurement yields the information $\sum_ip_i(-\log p_i)=S(\rho)$ on average over many independent retrievals. Mathematically, $\sum_ip_i(-\log p_i)$ is the minimum (over all complete measurements) of the averaged measurement outcome information on a state with eigendecomposition $\rho=\sum_ip_i\ket{i}\bra{i}$.

In contrast to the information stored \emph{in} a system, which was just quantified by the entropy, one can instead consider the information someone has \emph{about} a system. The information that an agent (with memory register $M$) has about the state of system $S$ is simply the \emph{correlations between $S$ and $M$}, described by the joint state $\rho_{SM}$ of the combined system $SM$ (see also Section \ref{memorysection}). And the \emph{amount} of correlations between $S$ and $M$ is quantified by the mutual information $I(S:M)$ (again, for an averaged or asymptotic scenario \cite{shannonmathemtheorycommun}). This makes sense since $I(S:M)=0$ is equivalent to $\rho_{SM}=\rho_S\otimes\rho_M$, meaning that the agent's memory does not hold any information about the microstate of $S$, whereas $I(S:M)=S(S)$ iff $S(S|M)=0$, such that the agent has (on average) perfect classical knowledge about the state of $S$.

\bigskip

One may now wonder whether a version of Landauer's bound also holds for the change of \emph{information about} a system. We show here that a straightforward analogy does not work. For the setup assume that, besides an initially thermal reservoir $R$ that is uncorrelated with the other systems (see Sections \ref{setupsubsection} and \ref{memorysection}), there are a system $S$ and a memory register $M$, which may be correlated:
\begin{align}
\rho_{SRM}~=~\rho_{SM}\otimes\rho_R\,,\qquad\rho_R=\frac{e^{-\beta H}}{\tr{e^{-\beta H}}}~.
\end{align}
The information about $S$ is thus $I(S:M)$ initially. Then the system $S$ and reservoir $R$ is subjected to a joint unitary process as described in Section \ref{setupsubsection},
\begin{align}
\rho'_{SRM}~=~\big(U_{SR}\otimes\ii_M\big)\rho_{SRM}\big(U^\dagger_{SR}\otimes\ii_M\big)~,
\end{align}
and we examine how the information of the memory $M$ about the system $S$ changes:
\begin{align}
\Delta I~:=~I(S:M)-I(S':M')~.
\end{align}
The process imagined does not affect the memory $M$; if it were allowed to, then $\Delta I$ can be virtually independent of the heat change $\Delta Q$, so that no version of Landauer's bound (such as possibly $\beta\Delta Q\geq\Delta I$) can hold. Note further that it is always $\Delta I\geq0$ in such processes due to the data processing inequality \cite{nielsenchuang,coverthomas}; this corresponds to ``information erasure'', whereas the entropy change $\Delta S$ in Sections \ref{setupsubsection} and \ref{memorysection} could have either sign.

But even so, there cannot be a straightforward version of Landauer's bound involving $\Delta I$. To see this, take any state $\rho_{SM}$, and consider a reservoir $R$ of the same size as $S$ and with initial state $\rho_R=\rho_S:={\rm{tr}}_M\left[\rho_{SM}\right]$ (note that every full-rank state $\rho_R$ is the thermal state of some Hamiltonian $H:=-\log\rho_R$ at $\beta=1$). Let the process $U_{SR}:={\mathbb{F}}_{SR}$ be the swap of $S$ and $R$ (cf.\ Example \ref{swapexample}). Then $I(S':M')=0$, and so $\Delta I=I(S:M)$, since $R$ and $M$ were initially uncorrelated, whereas $\beta\Delta Q=0$ due to $\rho'_R=\rho_R$. Thus, the tentative inequality $\beta\Delta Q\geq\Delta I$ is here violated whenever $I(S:M)>0$. The latter happens in particular when $M$ initially has perfect classical or quantum knowledge about a non-pure state $\rho_S$ (cf.\ beginning of Section \ref{memorysection}).

As another violating example, take a product initial state $\rho_{SRM}=\rho_S\otimes\rho_R\otimes\rho_M$ with $d:={\rm{dim}}(R)={\rm{dim}}(S)\geq2$ and again $U_{SR}={\mathbb{F}}_{SR}$; then $\Delta I=0$, whereas $\beta\Delta Q=\tr{(\rho_R-\rho_S)(\log\rho_R)}$ may assume either sign. Namely, $\beta\Delta Q$ becomes negative for example when $\rho_S:=\ket{\psi}\bra{\psi}$ is any pure state and $\rho_R:=(1-\lambda)\ket{\psi}\bra{\psi}+\lambda\ii/d$ with any $\lambda\in(0,1)$, since (cf.\ Remark 5 in \cite{inequalityarxivpaper})
\begin{align}
\beta\Delta Q~&=~\tr{(\rho_R-\rho_S)\log\rho_R}~=~S(\rho_S)-S(\rho_R)+D(\rho_S\|\rho_R)\\
&=~-S\left((1-\lambda)\ket{\psi}\bra{\psi}+\lambda\ii/d\right)+D\left(\ket{\psi}\bra{\psi}\,\|\,(1-\lambda)\ket{\psi}\bra{\psi}+\lambda\ii/d\right)\nonumber\\
&<~-(1-\lambda)S\left(\ket{\psi}\bra{\psi}\right)-\lambda S\left(\ii/d\right)+(1-\lambda)D\left(\ket{\psi}\bra{\psi}\,\|\,\ket{\psi}\bra{\psi}\right)+\lambda D\left(\ket{\psi}\bra{\psi}\,\|\,\ii/d\right)\nonumber\\
&=~\lambda D\left(\ket{\psi}\bra{\psi}\,\|\,\ii/d\right)\,-\,\lambda S\left(\ii/d\right)~=~\lambda\left(\log d-\log d\right)~=~0\nonumber
\end{align}
due to strict concavity of the entropy and convexity of the relative entropy; one can actually find $\lambda$ such that $\beta\Delta Q<-0.4\log d$, whereas $\beta\Delta Q>0.2-\log d$ for any $\rho_S$, $\rho_R$. Again, the inequality $\beta\Delta Q\geq\Delta I$ does not hold here. On the other hand, $\beta\Delta Q$ is positive by Theorem \ref{landauereqntheorem} for any $\rho_S$, $\rho_R$ with $S(\rho_S)>S(\rho_R)$ and can be come arbitrarily big for any fixed $d\geq2$; thus, also a reversed inequality, such as tentatively $\beta\Delta Q\leq\Delta I$, cannot hold in general.

\bigskip

Other tentative notions of a Landauer Principle for correlations can be dismissed similarly. One may for example define \emph{complete erasure of information} to mean any process $U=U_{SR}\otimes\ii_M$, together with a thermal resource state $\rho_R$, which satisfies
\begin{align}
{\rm{tr}}_R\left[U(\psi_{SM}\otimes\rho_R)U^\dagger\right]~=~\frac{\ii_S}{d_S}\otimes\frac{\ii_M}{d_S}\qquad\forall~\text{max.~entangled or class.~correlated}~\psi_{SM}~.
\end{align}
Such a complete erasure process is necessarily a swap of $S$ with a $d_S$-dimensional completely mixed subsystem of $R$. But this does not require any heat dissipation, as shown in the first example above where $\beta\Delta Q=0$.

\section{Processes approaching Landauer's bound}\label{attainingsection}
Theorem \ref{landauereqntheorem} is a sharpened version of Landauer's Principle, and Theorem \ref{maintheoremcombined} makes the sharpening  more explicit through dimension-dependent lower bounds on the improvement. Given this, one may now wonder about the possibility for dimension-independent improvements of the Landauer bound $\beta\Delta Q\geq\Delta S$ \cite{landauer}. To answer this, we construct here processes which, for a desired state transformation $\rho_S\mapsto\rho'_S$, approach Landauer's bound arbitrarily closely. This is analogous to processes on single systems which come close to extracting the maximal amount of work allowed by the Second Law from a nonequilibrium system, see e.g.\ \cite{andersgiovannetti,Skrzypczyk,catalyticcoherence}.

By Section \ref{boundonpureness}, a process $\rho_S\mapsto\rho'_S$ is achievable with a finite-dimensional reservoir only if $\rank{\rho'_S}\geq\rank{\rho_S}$; this is the case we treat below, formulating our construction as Proposition \ref{non-rank-decreasing-prop}. The following construction also illustrates that, for any $\Delta S\neq0$, the reservoir dimension has to grow indefinitely as Landauer's bound $\beta\Delta Q\geq\Delta S$ is approached (see Theorem \ref{maintheoremcombined}). Rank-decreasing processes are the subject of Appendix \ref{purestateerasure}.

\begin{proposition}[Rank-non-decreasing processes]\label{non-rank-decreasing-prop}Let two quantum states $\rho_S,\rho'_S\in\cB(\C^{d_S})$ be given with $1\leq d_S<\infty$ and $\rank{\rho'_S}\geq\rank{\rho_S}$, and let $\varepsilon>0$. Then there exists a reservoir of finite dimension $d<\infty$ with Hamiltonian $H\in\cB(\C^d)$ and inverse temperature $\beta:=1$ and a unitary $U$, such that the resulting process (see Section \ref{setupsubsection}) satisfies
\begin{align}
\beta\Delta Q~\leq~\Delta S+\varepsilon~.
\end{align}
That is, Landauer's bound $\beta\Delta Q\geq\Delta S$ can be approached arbitrarily closely.
\end{proposition}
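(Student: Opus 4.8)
The plan is to build the reservoir out of many independent copies of the "target" system and to implement the transformation $\rho_S\mapsto\rho'_S$ as a sequence of partial swaps, averaging out the error over the $k$ copies so that the heat cost per step shrinks like $1/k$. Concretely, first I would reduce to the case where $\rho'_S$ has full rank: since $\rank{\rho'_S}\ge\rank{\rho_S}$, I can conjugate $\rho_S$ by a unitary on $\C^{d_S}$ so that $\supp\rho_S\subseteq\supp\rho'_S$, which costs no heat and does not change $\Delta S$; and by a further perturbation argument (continuity of $S(\cdot)$ and of $\Delta Q$) it suffices to treat full-rank $\rho'_S$, absorbing the approximation into $\varepsilon$. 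So assume $\rho'_S$ is full rank on $\C^{d_S}$, and set $H_0:=-\log\rho'_S$ on a single copy $\C^{d_S}$, so that the $\beta=1$ thermal state of $H_0$ is exactly $\rho'_S$.

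Next I would take the reservoir to be $\C^d=(\C^{d_S})^{\otimes k}$ with Hamiltonian $H:=\sum_{j=1}^k H_0^{(j)}$, whose $\beta=1$ thermal state is the product $(\rho'_S)^{\otimes k}$, so that $\Delta Q=\tr{H(\rho'_R-\rho_R)}$ is a sum of $k$ single-copy energy differences measured against $H_0$. The process $U$ I have in mind chains $k$ swap-like steps: in step $j$ the system register is swapped (or partially swapped) with the $j$-th reservoir copy. A clean choice is the cyclic permutation that moves the system state through the register, so that after $k$ steps the marginal on $S$ is $\rho'_S$ exactly if the reservoir copies were all in state $\rho'_S$; the point is that each reservoir copy is touched only once, and at the moment it is touched the "incoming" system state is close to $\rho'_S$ (it was $\rho'_S$ one step earlier up to the single swap with the previous, already-thermal copy). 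Then each of the $k$ single-copy contributions to $\Delta Q$ is of the form $\tr{H_0(\sigma_j-\rho'_S)}$ with $\sigma_j$ close to $\rho'_S$; summing, $\Delta Q$ stays bounded while $\Delta S$ is essentially the single-copy target value, and one reads off $\beta\Delta Q-\Delta S\to 0$. An even more robust route, if the cyclic construction is delicate to analyze, is to invoke the equality version of Landauer's Principle directly: $\beta\Delta Q-\Delta S = I(S':R')+D(\rho'_R\|\rho_R)$, and then show that a well-chosen $U$ on the $k$-copy reservoir makes both $I(S':R')$ and $D(\rho'_R\|\rho_R)$ vanish as $k\to\infty$. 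The relative-entropy term is the crux: $\rho'_R$ is a state on $(\C^{d_S})^{\otimes k}$ whose one-copy marginals are each $\varepsilon_k$-close to $\rho'_S$, so by subadditivity $D(\rho'_R\|(\rho'_S)^{\otimes k})\le\sum_j D(\rho'_{R,j}\|\rho'_S)$ plus a correlation correction, and with the per-copy perturbation going to zero this is $o(1)$ after dividing through.

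The main obstacle I expect is quantitative control of the per-step error, i.e.\ showing that the state fed into reservoir copy $j$ is within $O(1/k)$ (or at least $o(1)$ uniformly in $j$) of $\rho'_S$ in a sense strong enough to make both $\tr{H_0(\sigma_j-\rho'_S)}$ summably small and $D(\rho'_{R,j}\|\rho'_S)$ summably small; since $H_0=-\log\rho'_S$ is bounded (here full-rankness of $\rho'_S$ is essential, else $\|H_0\|=\infty$ and $\Delta Q$ blows up, exactly the obstruction of Section~\ref{boundonpureness}), trace-norm closeness of the marginals suffices for the energy term, and Pinsker-type or explicit continuity bounds for relative entropy on the full-rank state $\rho'_S$ handle the entropy term. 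A secondary technical point is bookkeeping the initial reduction: checking that the unitary pre-alignment $\rho_S\mapsto$ (state supported in $\supp\rho'_S$) plus the $\delta$-perturbation of the target can be folded into the single parameter $\varepsilon$ without circularity. Once these continuity estimates are in place, choosing $k$ large enough to beat any prescribed $\varepsilon$ is immediate, and the finite reservoir dimension $d=d_S^{\,k}$ is manifestly finite, completing the proof.
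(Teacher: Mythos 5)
There is a genuine gap at the heart of your construction: a reservoir made of $k$ \emph{identical} copies, each thermal for the same Hamiltonian $H_0=-\log\rho'_S$ (i.e.\ each copy already in the target state $\rho'_S$), cannot approach Landauer's bound no matter how the swaps are chained or how large $k$ is. With full swaps, only the first step does anything: after it the system marginal is exactly $\rho'_S$, and every subsequent swap exchanges two subsystems both in state $\rho'_S$, so the whole process collapses to a single swap whose excess dissipation is, by Eq.\ (\ref{landauereqnshort}), $\beta\Delta Q-\Delta S=D(\rho_0\|\rho'_S)$ --- a fixed positive number independent of $k$ (here $\rho_0$ is the support-aligned initial state). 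Partial swaps do not help either: in the collision-model picture with bath copies all in $\rho'_S$, the per-step heats telescope, giving in the many-step limit $\beta\Delta Q\to\tr{(-\log\rho'_S)(\rho_0-\rho'_S)}$ and hence again $\beta\Delta Q-\Delta S\to D(\rho_0\|\rho'_S)$. Your key claim that ``at the moment it is touched the incoming system state is close to $\rho'_S$'' fails exactly at the first step, where the incoming state is $\rho_0$, and that step carries all of the dissipation; the error is not averaged over the $k$ copies. The missing idea is that the $k$ reservoir subsystems must be thermal for a \emph{sequence of different} Hamiltonians $H_i:=-\log(\rho_i|_r)$ built from interpolating states $\rho_0,\rho_1,\dots,\rho_k=\rho'_S$ (e.g.\ $\rho_i=(1-\tfrac{i}{k})\rho_0+\tfrac{i}{k}\rho'_S$), so that the $i$-th full swap exchanges two \emph{neighboring} states; then $\beta\Delta Q-\Delta S=\sum_{i=1}^k D(\rho_{i-1}\|\rho_i)\leq\bigl[D(\rho_0\|\rho'_S)+D(\rho'_S\|\rho_0)\bigr]/k\to0$, which is the mechanism of the paper's proof.

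A secondary issue is your reduction to full-rank $\rho'_S$ by perturbing the target. This is both unnecessary and delicate: the paper works exactly on the $r$-dimensional support of $\rho'_S$ ($r=\rank{\rho'_S}$), taking reservoir subsystems of dimension $r$ and reaching $\rho'_S$ \emph{exactly}, which is the intended content of the proposition. Your perturbation changes the final state to some $\widetilde{\rho}'_S\neq\rho'_S$, and if $\rho'_S$ is rank-deficient then $\|H_0\|=\|\log\widetilde{\rho}'_S\|\to\infty$ as the perturbation is removed, so the asserted ``continuity of $\Delta Q$'' does not come for free (this is precisely the obstruction discussed in Section \ref{boundonpureness}). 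Working inside $\supp[\rho'_S]$ from the start, after the heat-free alignment unitary $U_0$, removes the problem entirely.
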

\begin{proof}Denote $r:=\rank{\rho'_S}$. We construct the reservoir $R$ as consisting of $k$ subsystems $\C^r$, i.e.\ $d=d_S^k$, and the reservoir Hamiltonian as a sum of local Hamiltonians $H=\sum_{i=1}^kH_i$, where each $H_i$ acts nontrivially only on subsystem $i$. The initial thermal reservoir state is thus $\rho_R=\otimes_{i=1}^k\rho_R^{(i)}$, where $\rho_R^{(i)}=e^{-H_i}/\tr{e^{-H_i}}$ are the local thermal states. We will construct the unitary $U$ as a product of several unitaries (``stepwise process''). As the zeroth step, apply a unitary $U_0$ to the system $S$ alone such that $\supp[U_0\rho_S U_0^\dagger]\subseteq\supp[\rho'_S]$; this does not change any entropies or cause any heat flow. For any state $\rho$ on $S$, denote by $\rho|_r$ the $r$-dimensional restriction onto the support of $\rho'_S$.

We now define $\rho_0:=U_0\rho_S U_0^\dagger$, $\rho_k:=\rho'_S$, and choose intermediate states $\rho_i$ satisfying $\supp[\rho_i]=\supp[\rho'_S]$ for $k=1,\ldots,k$. One possible choice is \cite{andersgiovannetti}
\begin{align}\label{andersgiovannettiprescription}
\rho_i~=~\left(1-\frac{i}{k}\right)\rho_0+\frac{i}{k}\rho'_S\qquad(i=0,\ldots,k)~,
\end{align}
but one can choose any $k-1$ points along a curve $\rho(t)$ ($t\in[0,1]$) connecting $\rho_S$ and $\rho'_S$ in the space of states in such a way that $\rho(t)$ is supported on the full subspace $\supp[\rho'_S]$ for all $t>0$. Define then the local Hamiltonians $H_i:=-\log(\rho_i|_r)\in\cB(\C^r)$; this gives $\rho_R^{(i)}=\rho_i|_r$.

\begin{figure}
\centering
\includegraphics{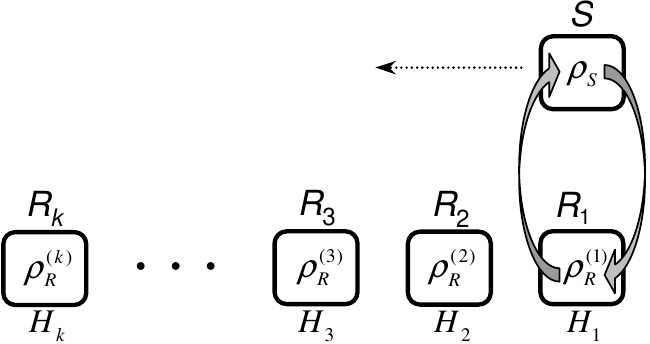}
\caption{\label{successiveswaps}The $k$-step process constructed in the proof of Proposition \ref{non-rank-decreasing-prop} (shown here for the case of full-rank $\rho_S$ and $\rho'_S$): The first step $\rho_S\equiv\rho_R^{(0)}\leftrightarrow\rho_R^{(1)}$ swaps the initial state of $S$ with the state of the first subsystem of $R$. After the $k$-th swap, $S$ is in the desired state $\rho'_S$, since the reservoir Hamiltonian (and temperature) has been constructed such that in thermal equilibrium the $k$-th reservoir subsystem has state $\rho_R^{(k)}=\rho'_S$. In the limit $k\to\infty$ of many small steps, the heat production exceeds $(S(\rho_S)-S(\rho'_S))/\beta=\Delta S/\beta$ only by a vanishingly small amount.}
\end{figure}
Finally, define the unitary $U_i$ in step $i$ to be the operator swapping the reservoir subsystem $i$ with the $r$-dimensional subspace $\supp[\rho'_S]$ of $S$, see also Fig.\ \ref{successiveswaps}.

After $k$ steps, the final system state is thus $\rho'_S$ and the system entropy has changed by $\Delta S:=S(\rho_S)-S(\rho'_S)$. The heat dissipation is (denoting $\rho_R^{(0)}:=\rho_S|_r$):
\begin{align}
\beta\Delta Q~&=~\tr{\beta H\,(\rho'_R-\rho_R)}~=~\tr{(\rho_R-\rho'_R)\log\rho_R}\\
&=~\tr{\left(\otimes_{i=1}^k\rho_R^{(i)}\,-\,\otimes_{i=1}^k\rho_R^{(i-1)}\right)\log\left(\otimes_{i=1}^k\rho_R^{(i)}\right)}\\
&=~\sum_{i=1}^k\tr{\left(\rho_R^{(i)}-\rho_R^{(i-1)}\right)\log\rho_R^{(i)}}~=~\sum_{i=1}^k\tr{(\rho_i-\rho_{i-1})\log\rho_i}\label{laststepbeforeriemannsum}~.
\end{align}
We now take any fixed curve $\rho(t)$, as outlined above, and make the discretization $\rho_i:=\rho(t_i)$ finer as $k\to\infty$, as in the definition of the Riemann integral (e.g.\ as in (\ref{andersgiovannettiprescription})). Then (\ref{laststepbeforeriemannsum}) equals
\begin{align}
\sum_{i=1}^k\tr{\left(\rho(t_i)-\rho(t_{i-1})\right)\log\rho(t_i)}~,
\end{align}
which for $k\to\infty$ converges to
\begin{align}
\to~&~\int_{t=0}^1\tr{d\rho(t)\,\log\rho(t)}~=~\int_0^1dt\,\tr{\dot{\rho}(t)\,\log\rho(t)}\\
&~~=~-\int_0^1dt\frac{d}{dt}\tr{\rho(t)-\rho(t)\log\rho(t)}~=~-\tr{\rho(1)-\rho(0)}\,-\,S\left(\rho(1)\right)+S\left(\rho(0)\right)\\
&~~=~S(\rho_S)-S(\rho'_S)~=~\Delta S~.
\end{align}
Thus, for any $\varepsilon>0$, there exists $k\in\nat$ such that for the associated process $\beta\Delta Q\leq\Delta S+\varepsilon$.
\end{proof}

For any fixed value of $k$ in the preceding proof we can also write, by Theorem \ref{landauereqntheorem} (Eq.\ (\ref{landauereqnshort})),
\begin{align}\label{betaDeltaQforksteps}
\beta\Delta Q~=~\Delta S+D(\rho'_R\|\rho_R)~=~\Delta S+\sum_{i=1}^{k}D(\rho_{i-1}\|\rho_i)~,
\end{align}
since $I(S':R')=0$ due to the swap processes (cf.\ Example \ref{swapexample}). In \cite{andersgiovannetti} an upper bound is derived for the sum in (\ref{betaDeltaQforksteps}) when using the prescription (\ref{andersgiovannettiprescription}):
\begin{align}
\sum_{i=1}^{k}D(\rho_{i-1}\|\rho_i)~\leq~\sum_{i=1}^{k}\left(D(\rho_{i-1}\|\rho_i)+D(\rho_i\|\rho_{i-1})\right)\,~\stackrel{\text{Eq.~}(\ref{andersgiovannettiprescription})}{=}\,~\frac{D(\rho_0\|\rho'_S)+D(\rho'_S\|\rho_0)}{k}~,\label{andersgiovannettiupperbound}
\end{align}
which is explicitly seen to converge to $0$ for $k\to\infty$ when ${\rm rank}[\rho'_S]={\rm rank}[\rho_S]$.

Conversely, there is a \emph{lower} bound on $\beta\Delta Q-\Delta S$ for \emph{any} $k$-step process, due to the convexity of the function $M(x,r)$ from Eq.\ (\ref{simplelookinglowerboundonrelativeentropy}) in its first argument (see \cite{inequalityarxivpaper}):
\begin{align}
\beta\Delta Q-\Delta S~=~D(\rho'_R\|\rho_R)~=~\sum_{i=1}^kD(\rho_R^{(i-1)}\|\rho_R^{(i)})~\geq~k\,M(\Delta S/k,r)~\geq~k\,M(\Delta S/k,d_S)~.\label{explicitlowerboundongapwithM}
\end{align}
This is stronger than the direct bound $D(\rho'_R\|\rho_R)\geq M(\Delta S,r^k)\geq M(\Delta S,d_S^k)$ from Eq.\ (\ref{simplelookinglowerboundonrelativeentropy}), since it is $M(\Delta S,r^k)=O((\Delta S)^2/k^2)$ as $k\to\infty$, whereas the RHS of (\ref{explicitlowerboundongapwithM}) is $O((\Delta S)^2/k)$ \cite{inequalityarxivpaper}. This shows that the $O(1/k)$ convergence in (\ref{andersgiovannettiupperbound}) for the prescription (\ref{andersgiovannettiprescription}) is optimal for stepwise processes.

The number of steps $k$ above may be interpreted as the time duration of the whole process, assuming that each individual swap consumes constant time. Thus, lower bounds on the difference $\beta\Delta Q-\Delta S$ as in (\ref{explicitlowerboundongapwithM}) corroborate the folklore that processes can become reversible only in the limit of slow processes and large process times (see also Section 3.1.1 in \cite{inequalityarxivpaper}).

\section{Open questions}\label{openquestionsect}
In the present work, we have investigated the energy expediture $\Delta Q$ necessary to decrease the system entropy by $\Delta S$ and have improved Landauer's bound (\ref{landauersbasicinequality}) in case of a small reservoir. With increasing technological control over miniature systems, the actual initial and desired final states $\rho_S$ and $\rho'_S$ (or their minimum eigenvalues, etc.) become more important than just their entropy difference. Thus, one can ask for better bounds $\beta\Delta Q\geq\Delta S+f(\rho_S,\rho'_S,d)$ than those implied by Theorem \ref{maintheoremcombined}; strictly better bounds do exist for generic $\rho_S$, $\rho'_S$, cf.\ Appendix \ref{tightnessofexplicitimprovementsection} and \cite{inequalityarxivpaper}. Such an improvement could also provide a stabilized version of the result from Section \ref{boundonpureness}, according to which $\rank{\rho'_S}<\rank{\rho_S}$ implies $\beta\Delta Q=\infty$ (when $d<\infty$). The explicit constructions from Section \ref{attainingsection} however show that non-trivial functions $f$ must necessarily depend on the finite reservoir size $d<\infty$.

A related issue is to find tight finite-size improvements in Theorem \ref{maintheoremcombined} for $\Delta S<0$. If a function $g$ is to satisfy $\beta\Delta Q\geq\Delta S+g(\Delta S,d)$, then the explicit examples from Appendix \ref{tightnessofexplicitimprovementsection} give upper bounds on it: $g(\Delta S,d)\leq M(\Delta S,d)$ for $\Delta S\in[-\log d,0]$, and $g(\Delta S,d)\leq-\Delta S$ for $\Delta S\in[-2\log d,-\log d]$. Our work however leaves open the question whether the best possible $g$ is smaller than those values.

Going beyond the setting of separable Hilbert spaces (Appendix \ref{sectioninfinitedim}), Landauer's Principle can probably be formulated within the general statistical mechanical framework of $C^*$- or $W^*$-dynamical systems \cite{puszworonowicz,brattelirobinson,thirringstatphys}, and an equality version akin to (\ref{landauereqnshort}) can possibly be proven along the lines of Theorem \ref{landauereqntheorem}. Note that in this framework the mutual information can be written as a relative entropy and the heat flow as a derivation w.r.t.\ the dynamical semigroup. For a finite-dimensional system $S$ and general operator-algebraic reservoir $R$, the recent preprint \cite{jaksic} gives a generalization of the equality from \cite{esposito} (see ``Note added'' below).

Finally, one may wonder whether thermodynamics puts constraints also on the erasure of correlations between systems. Straightforward guesses at such relations, inspired by Landauer's Principle, were shown in Section \ref{correlationsubsection} to be violated in general. Also, one-shot formulations (e.g.\ in the framework of \cite{renneroneshotqsideinfo,trulyworklike,oppenhoro,phfaist}) of our equality version of Landauer's Principle and the explicit finite-size corrections remain for future work.

\medskip{\bf Note added.} After completion of the present work, we became aware of the paper \cite{esposito} by Esposito, Lindenberg, and Van den Broeck, which for a setting similar to Section \ref{setupsubsection} gives an equality that is easily seen to be equivalent to Eq.\ (\ref{landauereqnshort}) above. While these authors identify the mutual information and relative entropy terms as the ``irreversible entropy production'' during the process and illustrate recurrences under continuous time-evolution in an explicit model, they do not make any connections to Landauer's Principle and in particular do not identify why the presuppositions of Theorem \ref{landauereqntheorem} are reasonable to capture this scenario (see Section \ref{setupsubsection}). No general explicit finite-size improvements of Landauer's bound were given, nor have achieving processes been discussed there.

\medskip{\bf Acknowledgments.} We thank Francesco Buscemi, Philippe Faist, Geza Giedke, Patrick Hayden, Steve Hsu, Daniel Lercher, Fernando Pastawski, and Marco Piani for discussions and valuable suggestions. DR acknowledges support from the Marie Curie Intra European Fellowship QUINTYL and the COST Action MP1209 ``Thermodynamics in the quantum regime''. MMW was supported by the Alfried Krupp von Bohlen und Halbach-Stiftung.

\appendix
\section{Thermodynamics of finite-dimensional systems}\label{thermodynamicappendix}
Here we collect and rigorously prove some facts from the thermodynamics of finite-dimensional systems, which are necessary especially to derive the finite-size improvements in Section \ref{finitesizesect}. In the main text of this paper, the reservoir $R$ plays the role of the $d$-dimensional system below (cf.\ Section \ref{setupsubsection}). All following derivations are equally valid for classical systems with finite state space (note that all occuring thermal states are diagonal in the eigenbasis of the Hamiltonian). For a more general development of statistical mechanics, see e.g.\ \cite{thirringstatphys,brattelirobinson}.

For the setup, we need a quantum system of finite Hilbert space dimension $d$, $1\leq d<\infty$, and a given and fixed \emph{Hamiltonian} $H$ for this system, i.e.\ a Hermitian operator $H\in\cB(\C^d)$.

Then, for any \emph{inverse temperature} $\beta\in[-\infty,+\infty]$, the corresponding \emph{thermal state} is
\begin{align}\label{thermalstateinappendix}
\rho_\beta~:=~\frac{e^{-\beta H}}{\tr{e^{-\beta H}}}~\qquad(\beta\in[-\infty,+\infty])~,
\end{align}
with the convention that $\rho_{\pm\infty}$ denotes the maximally mixed state on the ground space of $\pm H$. (The latter convention is physically sensible, and furthermore ensures $\lim_{\beta\to\pm\infty}\rho_\beta=\rho_{\pm\infty}$, so that $\rho_\beta$ is continuous in $\beta\in[-\infty,+\infty]$.) Thermal states are often only defined for non-negative $\beta$, but for finite-dimensional systems there is no necessity for this restriction, besides occasional notational convenience. Physically speaking, thermal states are the stable (``equilibrium'') states of a system with Hamiltonian $H$ at temperature $1/\beta$ \cite{puszworonowicz,brattelirobinson,thirringstatphys}. As such, they are ``cheaply available'' when these physical conditions are met and can be used ``at no cost'' during the processes described in Section \ref{setupsubsection}.

We denote the \emph{thermal average} of an operator $A\in\cB(\C^d)$ sometimes by $\langle A\rangle_\beta:=\tr{A\rho_\beta}$ and its variance by $\var_\beta(A):=\langle \left(A-\langle A\rangle_\beta\right)^2\rangle_\beta$.

The \emph{energy $E=E(\beta)$ of a thermal state} is the thermal average of the Hamiltonian:
\begin{align}\label{energythermal}
E(\beta)~:=~\tr{H\rho_\beta}~=~\tr{H\frac{e^{-\beta H}}{\tr{e^{-\beta H}}}}~\qquad(\beta\in[-\infty,+\infty])~.
\end{align}
Obviously, $E(\beta)$ is a continuous function of $\beta\in[-\infty,+\infty]$ and smooth in $\beta\in(-\infty,+\infty)$. By continuity at $\beta=\pm\infty$ we mean $\lim_{\beta\to\pm\infty}E(\beta)=E(\pm\infty)$. It is easy to see that $E(\infty)=E_{min}(H)$ and $E(-\infty)=E_{max}(H)$, where $E_{min}(H)$ and $E_{max}(H)$ denote the minimal and maximal eigenvalues (\emph{energy levels}) of $H$, respectively.

\begin{lemma}[Heat capacity]\label{strictlydecreasingElemma}
Let $H$ be a Hamiltonian on a finite-dimensional system. Then:
\begin{equation}
\frac{d}{d\beta}E(\beta)~=~-\var_\beta(H)\qquad\text{for}~~\beta\in(-\infty,+\infty)~.\label{heatcapacitybeta}
\end{equation}

If $H$ has at least two distinct energy levels, i.e.\ if $H\not\propto\1_R$, then the energy $E(\beta)=\tr{H\rho_\beta}$ is strictly decreasing in $\beta\in[-\infty,+\infty]$, with strictly negative derivative $\frac{d}{d\beta}E(\beta)~<~0$ for $\beta\in(-\infty,+\infty)$.

If $H\propto\1_R$, then $E(\beta)=\tr{H}/d$ is constant in $\beta\in[-\infty,+\infty]$.
\end{lemma}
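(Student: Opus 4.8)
The plan is to obtain the identity \eqref{heatcapacitybeta} by a direct computation with the partition function, and then to read off the monotonicity statements from the sign of the variance. Write $Z(\beta):=\tr{e^{-\beta H}}$ for $\beta\in(-\infty,+\infty)$. Since $H$ commutes with itself, $\beta\mapsto e^{-\beta H}$ is an entire $\cB(\C^d)$-valued function with $\frac{d}{d\beta}e^{-\beta H}=-He^{-\beta H}$, so differentiation may be exchanged with the (finite-dimensional) trace. Hence $Z'(\beta)=-\tr{He^{-\beta H}}$, and one checks $E(\beta)=\tr{H\rho_\beta}=-Z'(\beta)/Z(\beta)=-\frac{d}{d\beta}\log Z(\beta)$. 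Differentiating once more,
\begin{equation}
\frac{d}{d\beta}E(\beta)~=~-\frac{d^2}{d\beta^2}\log Z(\beta)~=~-\left(\frac{Z''(\beta)}{Z(\beta)}-\Big(\frac{Z'(\beta)}{Z(\beta)}\Big)^2\right)~=~-\big(\langle H^2\rangle_\beta-\langle H\rangle_\beta^2\big)~=~-\var_\beta(H)~,
\end{equation}
using $Z''(\beta)=\tr{H^2e^{-\beta H}}$ and the definition $\var_\beta(H)=\langle H^2\rangle_\beta-\langle H\rangle_\beta^2$. (Equivalently, one can differentiate $E(\beta)=\tr{He^{-\beta H}}/\tr{e^{-\beta H}}$ directly by the quotient rule, obtaining the same.)

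Next I would show that $\var_\beta(H)>0$ for every finite $\beta$ precisely when $H\not\propto\1_R$. In the eigenbasis of $H$ with eigenvalues $E_1,\dots,E_d$, the thermal state $\rho_\beta$ is diagonal with entries $p_i(\beta)=e^{-\beta E_i}/Z(\beta)>0$ (strictly positive for finite $\beta$), so $\var_\beta(H)=\sum_i p_i(\beta)\,(E_i-\langle H\rangle_\beta)^2\geq0$, and this vanishes iff all $E_i$ coincide, i.e.\ iff $H\propto\1_R$. Therefore, if $H\not\propto\1_R$ then $\frac{d}{d\beta}E(\beta)=-\var_\beta(H)<0$ on $(-\infty,+\infty)$, so $E$ is strictly decreasing on the open interval.

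It then remains to extend strict monotonicity to the closed interval $[-\infty,+\infty]$ and to treat the scalar case. For the extension I would use the continuity of $E$ on $[-\infty,+\infty]$ noted just before the lemma (with $E(\pm\infty)=\lim_{\beta\to\pm\infty}E(\beta)$): given $\beta_1<\beta_2$ in $[-\infty,+\infty]$, choose finite $\gamma_1<\gamma_2$ with $\beta_1\le\gamma_1$ and $\gamma_2\le\beta_2$; monotonicity on the interior together with continuity yields $E(\beta_1)\ge E(\gamma_1)>E(\gamma_2)\ge E(\beta_2)$, hence $E(\beta_1)>E(\beta_2)$. Finally, if $H=c\1_R$ then $e^{-\beta H}=e^{-\beta c}\1_R$, so $\rho_\beta=\1_R/d$ for every $\beta\in[-\infty,+\infty]$ (using the stated convention at $\pm\infty$, where the ground space is all of $\C^d$), and therefore $E(\beta)=\tr{H}/d=c$ is constant. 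I do not expect a genuine obstacle: the differentiation is routine in finite dimensions, and the only mildly delicate point is the soft continuity argument handling the endpoints $\beta=\pm\infty$.
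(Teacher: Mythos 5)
Your proposal is correct and follows essentially the same route as the paper: differentiate $E(\beta)$ (the paper does so directly by the quotient rule, you via $-\frac{d^2}{d\beta^2}\log Z$, which is the same computation), observe that $\var_\beta(H)>0$ for finite $\beta$ whenever $H\not\propto\1_R$ because $\rho_\beta$ has full rank, and use continuity at $\beta=\pm\infty$ to extend strict monotonicity to the closed interval, with the scalar case trivial. Your explicit two-point argument for the endpoint extension just spells out the continuity step the paper states in one line.
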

\begin{proof} For $\beta\in(-\infty,\infty)$,
\begin{align}
\frac{d}{d\beta}\tr{H\frac{e^{-\beta H}}{\tr{e^{-\beta H}}}}~&
=~-\tr{H^2\frac{e^{-\beta H}}{\tr{e^{-\beta H}}}}+\left(\tr{H\frac{e^{-\beta H}}{\tr{e^{-\beta H}}}}\right)^2\\
&=~-\tr{\rho_\beta\left(H-\tr{H\rho_\beta}\1_R\right)^2}~=~-\var_\beta(H) ,
\end{align}
which is strictly negative if $H\neq\tr{H\rho_\beta}\1_R$ since $\rho_\beta$ is of full rank. Continuity $\lim_{\beta\to\pm\infty}E(\beta)=E(\pm\infty)$ gives then strict monotonicity in $\beta\in[-\infty,+\infty]$. The case $H\propto\ii_R$ is obvious.
\end{proof}Lemma \ref{strictlydecreasingElemma} implies that the inverse of the function $E=E(\beta)$, namely
\begin{align}
\beta:[E_{min}(H),E_{max}(H)]\to[-\infty,+\infty]\,,~~\beta~=~\beta(E)~,\label{betafunctionofE}
\end{align}
exists iff $H\not\propto\1_R$, is strictly decreasing, continuous, and smooth in the interior of its domain, with derivative
\begin{align}
\frac{d}{dE}\,\beta(E)~=~\left(\left.\frac{dE(\beta)}{d\beta}\right|_{\beta=\beta(E)}\right)^{-1}~=~\frac{-1}{\var_{\beta(E)}(H)}~.\label{dbetadE}
\end{align}

\bigskip

The LHS of (\ref{heatcapacitybeta}) is also called the \emph{heat capacity w.r.t.\ inverse temperature}, i.e.~the instantaneous rate of change in the system energy as $\beta$ is varied; by Lemma \ref{strictlydecreasingElemma} this equals the (negative) energy fluctuations. When thermal states are parametrized in terms of the \emph{temperature} $T:=1/\beta$ (let here $T\in(-\infty,+\infty)\setminus\{0\}$), then the corresponding \emph{heat capacity} is
\begin{align}
C(T)~:=~\frac{d}{dT}E(T)~=~\frac{d\beta(T)}{dT}\cdot\frac{dE(\beta)}{d\beta}~=~\beta^2\var_\beta(H)~=~\var_\beta(\beta H)~.\label{heatcapacityT}
\end{align}
Note that this quantity equals $\var_{\beta}(\log\rho_\beta)$, the variance of the operator $\log\rho_\beta$ in the thermal state $\rho_\beta$. In Section \ref{subsectDeltaQbounds}, we use that (\ref{heatcapacityT}) is upper bounded just in terms of the system dimension, namely by $N(d)$, which is defined in Eq.\ (\ref{defineNdinproofpaper}), cf.\ (\ref{upperboundonvargamma}). See \cite{inequalityarxivpaper} for a proof.

\bigskip

The \emph{entropy $S=S(\beta)$ of a thermal state} is (cf.\ also Section \ref{furthernotationsubsect})
\begin{equation}
S(\beta)~:=~-\tr{\rho_\beta\log\rho_\beta}~=~-\tr{\frac{e^{-\beta H}}{\tr{e^{-\beta H}}}\log\frac{e^{-\beta H}}{\tr{e^{-\beta H}}}}~\qquad(\beta\in[-\infty,+\infty])~.\label{Sfunctionofbeta}
\end{equation}
Essentially paralleling the discussion following (\ref{energythermal}), $S(\beta)$ is continuous in $\beta\in[-\infty,+\infty]$ (by \cite{fannesinequalitypaper}), and smooth in the interior of its domain with first derivative (after some elementary computation)
\begin{equation}
\frac{d}{d\beta}S(\beta)~=~-\beta\var_\beta(H)~\qquad(\beta\in(-\infty,+\infty))~.\label{derivativeSbeta}
\end{equation}

If $H\not\propto\1_R$, the temperature $\beta=\beta(E)$ is a function of the energy by (\ref{betafunctionofE}), and (with some common abuse of notation) the entropy $S(E):=S(\beta(E))$ can also be viewed as a function of the energy $E$ of a thermal state. This function is well-defined even in the case $H\propto\1_R$, since then $\rho_\beta=\rho_{\beta'}$ for all $\beta,\beta'\in[-\infty,+\infty]$. Thus one always has a well-defined function
\begin{equation}
S:[E_{min}(H),E_{max}(H)]\to[0,\log d]\,,~S~=~S(E)~.\label{SfunctionofE}
\end{equation}

The entropy, energy and temperature of thermal states are related as follows:
\begin{lemma}[Relation between $S$, $E$, and $\beta$ for thermal states]\label{lemmaSEbeta}
Let $H$ be a Hamiltonian on a finite-dimensional system. Then the entropy $S=S(E)$ as a function of the energy of thermal states is continuous, smooth in the interior of its domain, and has first derivative
\begin{equation}
\frac{d}{dE}S(E)~=~\beta(E)\qquad\text{for}~~E_{min}(H)<E<E_{max}(H)~.
\end{equation}
\end{lemma}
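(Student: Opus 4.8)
The plan is to compute $\frac{d}{dE}S(E)$ via the chain rule, treating $S(E) = S(\beta(E))$ as a composition of the two single-variable functions already analyzed in Lemmas \ref{strictlydecreasingElemma} and in Eq.\ (\ref{derivativeSbeta}). First I would dispose of the degenerate case $H\propto\1_R$: there the interior of the domain $[E_{min}(H),E_{max}(H)]$ is empty (since $E_{min}=E_{max}$), so the claimed derivative formula holds vacuously, and $S(E)$ is trivially continuous and smooth on its one-point domain. So from now on assume $H\not\propto\1_R$, which is precisely the hypothesis under which $\beta=\beta(E)$ exists as a genuine function on $[E_{min}(H),E_{max}(H)]$, strictly decreasing, continuous, and smooth in the interior, with derivative $\frac{d}{dE}\beta(E)=-1/\var_{\beta(E)}(H)$ by Eq.\ (\ref{dbetadE}).

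Next I would establish the regularity claims. Continuity of $S=S(E)$ on the whole interval follows by composing the continuity of $\beta\mapsto S(\beta)$ on $[-\infty,+\infty]$ (noted after Eq.\ (\ref{Sfunctionofbeta}), citing \cite{fannesinequalitypaper}) with the continuity of $E\mapsto\beta(E)$ from Eq.\ (\ref{betafunctionofE}). Smoothness in the interior follows because both $\beta\mapsto S(\beta)$ (smooth in $(-\infty,+\infty)$ by Eq.\ (\ref{derivativeSbeta})) and $E\mapsto\beta(E)$ (smooth in the interior of $[E_{min}(H),E_{max}(H)]$) are smooth, and the composition of smooth maps is smooth; one checks that the interior of the $E$-domain maps into the interior $(-\infty,+\infty)$ of the $\beta$-domain, which holds since $\beta(E_{min})=+\infty$, $\beta(E_{max})=-\infty$ and $\beta$ is strictly monotone.

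Then the derivative computation is the chain rule applied at an interior point $E$ with $\beta:=\beta(E)\in(-\infty,+\infty)$:
\begin{equation}
\frac{d}{dE}S(E)~=~\left.\frac{dS(\beta)}{d\beta}\right|_{\beta=\beta(E)}\cdot\frac{d\beta(E)}{dE}~=~\bigl(-\beta(E)\,\var_{\beta(E)}(H)\bigr)\cdot\frac{-1}{\var_{\beta(E)}(H)}~=~\beta(E)~,
\end{equation}
using Eq.\ (\ref{derivativeSbeta}) for the first factor and Eq.\ (\ref{dbetadE}) for the second. The cancellation is legitimate because $\var_{\beta(E)}(H)>0$ strictly for $\beta(E)\in(-\infty,+\infty)$ when $H\not\propto\1_R$ (this is exactly the strict-negativity statement in Lemma \ref{strictlydecreasingElemma}, since $\rho_\beta$ has full rank and $H$ is not a multiple of the identity).

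The only real subtlety — and the step I would be most careful about — is the behaviour at the boundary energies $E=E_{min}(H)$ and $E=E_{max}(H)$, where $\beta(E)=\pm\infty$ and both $\var_{\beta}(H)\to$ (something) and the formula for $\frac{d}{dE}S$ would formally read $\pm\infty$; but the statement only asserts the derivative identity on the open interval $E_{min}(H)<E<E_{max}(H)$, so no claim at the endpoints is needed and this apparent difficulty is avoided by the phrasing. One should, however, make sure the chain rule is being applied on the open interval where all objects are differentiable, and that $\var_{\beta(E)}(H)\neq 0$ there so that the division in Eq.\ (\ref{dbetadE}) is valid; both are guaranteed by $H\not\propto\1_R$ together with full-rankness of $\rho_\beta$ for finite $\beta$, as recorded in Lemma \ref{strictlydecreasingElemma}. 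This completes the proof.
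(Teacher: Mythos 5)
Your proposal is correct and follows essentially the same route as the paper's own proof: dispose of the $H\propto\1_R$ case, get continuity and smoothness by composing $S(\beta)$ with $\beta(E)$, and apply the chain rule with Eqs.\ (\ref{derivativeSbeta}) and (\ref{dbetadE}) so that the variance cancels. Your explicit check that $\var_{\beta(E)}(H)>0$ on the open interval (justifying the cancellation) is a point the paper leaves implicit, but otherwise the arguments coincide.
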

\begin{proof}If $H\propto\1_R$, the domain of $S(E)$ consists of a single point and there is nothing to prove. Otherwise, since $\beta=\beta(E)$ is smooth and strictly decreasing in the interior of the domain by (\ref{betafunctionofE})--(\ref{dbetadE}) and $S=S(\beta)$ is smooth in the interior of its domain by (\ref{Sfunctionofbeta})--(\ref{derivativeSbeta}), the smoothness claim follows; similarly does continuity on the whole domain. Then, using the chain rule and Eqs.~(\ref{dbetadE}) and (\ref{derivativeSbeta}) for $E\in(E_{min}(H),E_{max}(H))$:
\begin{equation}
\frac{d}{dE}S(E)~=~\left.\frac{dS(\beta)}{d\beta}\right|_{\beta=\beta(E)}\cdot\left.\frac{d\beta(E)}{dE}\right|_E~=~\frac{-\beta(E)\var_{\beta(E)}(H)}{-\var_{\beta(E)}(H)}~=~\beta(E)~.
\end{equation}
\end{proof}

\bigskip

Lastly, we introduce the free energy, a notion used ubiquitously in traditional thermodynamics. The relative entropy between any state $\rho$ and a thermal state $\rho_\beta$ can, for $\beta\in(-\infty,\infty)$, be written as
\begin{align}
D(\rho\|\rho_\beta)~&=~\beta\left(\tr{H\rho}-\tr{H\rho_\beta}\right)\,-\,\left(S(\rho)-S(\rho_\beta)\right)\label{freeenergydiffasD}\\
&=~\beta F_\beta(\rho)\,-\,\beta F_\beta(\rho_\beta)~,\label{writeasdifferenceoffreeenergies}
\end{align}
where we defined the \emph{dimensionless free energy $\beta F_\beta$} of any state $\rho$ as:
\begin{align}
\beta F_\beta(\rho)~:=~\beta\tr{H\rho}-S(\rho)~.\label{defineBetaF}
\end{align}
With obvious and usual conventions, Eqs.\ (\ref{freeenergydiffasD})--(\ref{defineBetaF}) hold for all $\beta\in[-\infty,+\infty]$ (note that (\ref{defineBetaF}) need not equal $0$ for $\beta=0$). Applied to $D(\rho\|\rho_\beta)$ in Eqs.\ (\ref{freeenergydiffasD})--(\ref{writeasdifferenceoffreeenergies}), Klein's inequality (see below Eq.\ (\ref{definerelentinnotationsection})) gives several versions of the \emph{thermodynamic inequality}: The thermal state $\rho=\rho_\beta$ is the unique maximizer of the entropy at fixed energy, and (for $\beta\geq0$) is the unique minimizer of the energy at fixed entropy. Equivalently, the functional $\beta F_\beta(\rho)$ is uniquely minimized by $\rho=\rho_\beta$, which corresponds to the usual free energy minimization in thermodynamics (for $\beta\geq0$). See \cite{ohyapetz,thirringstatphys,brattelirobinson} for more detailed discussions.

\section{Tightness of the finite-size improvements}\label{tightnessofexplicitimprovementsection}
Here we investigate how tight our finite-size bounds from Section \ref{finitesizesect} are.

Let $\rho_{SR}=\rho_S\otimes\rho_R\mapsto\rho'_{SR}=U(\rho_S\otimes\rho_R)U^\dagger$ be any process as considered in Theorem \ref{maintheoremcombined} (see also Section \ref{setupsubsection}), with a reservoir of dimension $d<\infty$. Before discussing the tightness of the bound Eq.\ (\ref{inequalitymaintheoremfinited}), we investigate the range of possible values of the quantity $\Delta S$, on which the bound depends. 

When $d$ is fixed, one can put upper and lower bounds on the entropy change $\Delta S$ of the system. A lower bound is obtained by
\begin{align}
\Delta S~&=~S(S)-S(S')~=~S(S)-[S(S'R')+I(S':R')-S(R')]\\
&=~S(S)-[S(S)+S(R)+I(S':R')-S(R')]~=~-S(R)-[I(S':R')-S(R')]~,
\end{align}
where we used $S(S'R')=S(RS)=S(S)+S(R)$ by unitarity (\ref{unitaryevolsetupsect}) and the product initial state assumption (\ref{productstateassumptioninsetup}). Now, for quantum systems $I(S':R')\leq2S(R')$, whereas the stronger inequality $I(S':R')\leq S(R')$ holds for classical systems \cite{wehrlreview,ohyapetz,nielsenchuang}. Lower bounds on $\Delta S$ are then obtained by noting $S(R),S(R')\leq\log d$:
\begin{equation}\label{upperlowerboundsonDeltaS}
\begin{aligned}
-2\log d~&\leq~\Delta S~\leq~\log d~~~\qquad\text{(quantum systems)\,,}\\
-\log d~&\leq~\Delta S~\leq~\log d~~~\qquad\text{(classical systems)\,,}
\end{aligned}
\end{equation}where for the upper bounds we used $\Delta S\leq\Delta=S(R')-S(R)\leq\log d$ by Lemma \ref{propsecondlaw}.

All inequalities can be attained when only the reservoir dimension $d$ is fixed: A swap (see Example \ref{swapexample}) between a pure $\rho_R$ and a maximally mixed $\rho_S$ (of dimension $d_S=d$) attains both upper bounds, whereas swapping a maximally mixed $\rho_R$ with a pure $\rho_S$ attains the classical lower bound. For the quantum lower bound, take the system $S$ to be composed of two $d$-dimensional subsystems $S_1$, $S_2$ in a maximally entangled initial state $\ket{\omega}_{S_1,S_2}$, and the reservoir $R$ again initially maximally mixed. Then the process that swaps $S_1$ and $R$ creates the final state $\rho'_{SR}=(\ii_{S_1}/d)\otimes\ket{\omega}_{S_2,R}\bra{\omega}$ with a maximally entangled state $\ket{\omega}_{S_2,R}$. Thus, $\rho'_S=\ii_S/d^2$ so that $\Delta S=-2\log d$. In this example, $\rho'_R=\rho_R=\ii_R/d$, which means there is no heat flow, $\Delta Q=0$.

\bigskip

We now investigate how tight the inequality (\ref{inequalitymaintheoremfinited}) from Theorem \ref{maintheoremcombined} is. Specifically, for any given $d$ and $\Delta S$ (which are the quantities appearing in the bound), does there exist a process such that the lower bound (\ref{inequalitymaintheoremfinited}) on $\beta\Delta Q$ holds with equality? The answer is in the affirmative when $\Delta S\geq0$ (which by (\ref{upperlowerboundsonDeltaS}) means $0\leq\Delta S\leq\log d$), but not for $\Delta S<0$.

To see this, consider a swap process (Example \ref{swapexample}) between a system with $d_S=d$ dimensions and the $d$-dimensional reservoir. Due to $I(S':R')=0$ and $\rho'_R=\rho_S$, $\rho'_S=\rho_R$, Eq.\ (\ref{landauereqnshort}) gives:
\begin{align}\label{optimalswapprocess}
\beta\Delta Q~=~\Delta S+D(\rho_S\|\rho_R)~.
\end{align}
Now, by \cite{inequalityarxivpaper}, for any given $d$ (with $2\leq d<\infty$) and given $\Delta S\in[-\log d,+\log d]$ there do exist $d$-dimensional states $\rho_S$, $\rho_R$ with $S(\rho_S)-S(\rho_R)=\Delta S$ and $D(\rho_S\|\rho_R)=M(\Delta S,d)$ (one can choose $\rho_R$ as a thermal state of some Hamiltonian at finite temperature, except when $\Delta S=\log d$, which requires $\beta=\infty$). The swap process with these special initial states $\rho_S$, $\rho_R$ thus satisfies $\beta\Delta Q=\Delta S+M(\Delta S,d)$, which shows that for $\Delta S\geq0$ the best lower bound on $\beta\Delta Q$ from (\ref{inequalitymaintheoremfinited}) is tight. This tightness holds for quantum as well as for classical systems, as the states $\rho_S,\rho_R$ satisfying $D(\rho_S\|\rho_R)=M(\Delta S,d)$ commute \cite{inequalityarxivpaper}. We leave open the question whether for $\Delta S>0$ the bound can be exactly tight even if one fixes $d_S$ independently of $d$.

For $\Delta S<0$ however, the lower bound on $\beta\Delta Q$ is given by the second selector in (\ref{inequalitymaintheoremfinited}), for which one can prove by using the leftmost inequality in (\ref{generallowerboundsonM}):
\begin{align}
\Delta S+M(\Delta S,d)~>~\Delta S+\left[N-\Delta S-\sqrt{N^2-2N\Delta S}\right]\qquad\text{for}~\Delta S\in[-\log d,0)~.
\end{align}
This, combined with (\ref{simplelookinglowerboundonrelativeentropy}), shows that for $\Delta S<0$ the above swap process (\ref{optimalswapprocess}) can never attain the lower bound on $\beta\Delta Q$ from (\ref{inequalitymaintheoremfinited}). But are there processes other than swaps (and possibly with $d_S\neq d$) that attain the bound (\ref{inequalitymaintheoremfinited}) for $\Delta S<0$?

Going through the derivation in Sections \ref{subsectDeltaQbounds} and \ref{subsectDeltaSleq0}, one actually sees that for $\Delta S<0$ the bound (\ref{inequalitymaintheoremfinited}) from Theorem \ref{maintheoremcombined} is never sharp. This is ultimately because (\ref{simplelowerboundonbetaDwithNdandDeltaQ}) is a \emph{strict} inequality for any $\Delta Q\neq0$ since the value $N(d)$ in (\ref{upperboundonvargamma}) is attained by $\var_\gamma(\gamma H)$ for \emph{at most one} value of $\gamma$, corresponding to at most one energy $E'$ in (\ref{lastequationincontinuousderviationwithequality})--(\ref{lowerboundonDwithDeltaQsquared}); this uniqueness is shown in \cite{inequalityarxivpaper}.

One might guess a better lower bound on $\beta\Delta Q$ in the case $\Delta S<0$ to be $\Delta S+M(\Delta S,d)$. This would at least be attained for the special swap process described below Eq.\ (\ref{optimalswapprocess}). On the other hand, this guess is well-defined merely for $\Delta S\geq-\log d$, since $M(\Delta S,d)$ is not defined for smaller $\Delta S$. In the quantum case, however, any value of $\Delta S\in[-2\log d,-\log d)$ is possible as well, even with $\Delta Q=0$; this follows from (\ref{upperlowerboundsonDeltaS}) and the subsequent example, if one replaces $\ket{\omega}_{S_1,S_2}$ by a general pure $\ket{\varphi}_{S_1,S_2}$ and uses the initial reservoir state $\rho_R:={\rm{tr}}_{S_2}\left[\ket{\varphi}_{S_1,S_2}\bra{\varphi}\right]$.

\section{Extended notions of Landauer processes}\label{extendednotionssectionapp}

\subsection{An integral version of Landauer's Principle}\label{integralversionsubsect}
Here we develop a modified version of Landauer's Principle, with an integral in place of the term $\beta\Delta Q$ from Theorem \ref{landauereqntheorem}. The derivation requires Lemma \ref{lemmaSEbeta} relating entropy, energy and inverse temperature, which is rigorously proven for finite dimensions in Appendix \ref{thermodynamicappendix}.

\begin{theorem}[Integral version of Landauer's Principle]\label{landauerintegralthm}Consider processes as described in Theorem \ref{landauereqntheorem}. Denote the energy of the initial reservoir state by $E_R:=\tr{H\rho_R}$, and denote by $\rho'_{R,th}$ the thermal state with energy ${\rm tr}[H\rho'_{R,th}]=E_R+\Delta Q=\tr{H\rho'_R}$. Then:
\begin{equation}
\Delta S\,+\,I(S':R')_{\rho'_{SR}}\,+\,D(\rho'_R\|\rho'_{R,th})~=~\int_{E_R}^{E_R+\Delta Q}\beta(E)\,dE~.\label{eqninintegralversion}
\end{equation}(See Appendix \ref{thermodynamicappendix} for the definition of $\beta(E)$, in particular Eq.\ (\ref{betafunctionofE}) and Lemma \ref{lemmaSEbeta}.)
\end{theorem}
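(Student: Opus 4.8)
The plan is to collapse the left-hand side of (\ref{eqninintegralversion}) into the entropy difference of two thermal reservoir states and then integrate. First I would apply the Second Law Lemma (Lemma \ref{propsecondlaw}), in the form (\ref{shorteqnformof2law}), to rewrite $\Delta S+I(S':R')$ as the reservoir entropy increase $\Delta=S(\rho'_R)-S(\rho_R)$. Next I would use that $\rho'_{R,th}$ is by construction the thermal state with the \emph{same} energy as $\rho'_R$: applying the free-energy identity (\ref{freeenergydiffasD}) to $D(\rho'_R\|\rho'_{R,th})$, the energy term $\beta'\big(\tr{H\rho'_R}-\tr{H\rho'_{R,th}}\big)$ drops out and one is left with $D(\rho'_R\|\rho'_{R,th})=S(\rho'_{R,th})-S(\rho'_R)$ (consistent with Klein's inequality, since $\rho'_{R,th}$ maximizes entropy at that energy). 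Adding these two facts, the left-hand side of (\ref{eqninintegralversion}) telescopes to $S(\rho'_{R,th})-S(\rho_R)$.

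Now $\rho_R$ is thermal with energy $E_R$ and $\rho'_{R,th}$ is thermal with energy $E_R+\Delta Q$, so in the notation of Appendix \ref{thermodynamicappendix} (cf.\ (\ref{SfunctionofE})) this difference is exactly $S(E_R+\Delta Q)-S(E_R)$; note that $\tr{H\rho'_R}\in[E_{min}(H),E_{max}(H)]$ for any state, so both endpoints of the integration lie in the domain of $S(E)$. By Lemma \ref{lemmaSEbeta} the function $E\mapsto S(E)$ is continuous on $[E_{min}(H),E_{max}(H)]$ and has derivative $\frac{d}{dE}S(E)=\beta(E)$ in the interior, so the fundamental theorem of calculus gives $S(E_R+\Delta Q)-S(E_R)=\int_{E_R}^{E_R+\Delta Q}\beta(E)\,dE$, which is (\ref{eqninintegralversion}). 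Equivalently, and more quickly, one can read off (\ref{eqninintegralversion}) by combining Theorem \ref{landauereqntheorem} with the intermediate identity (\ref{firsttimeintegralappears}) already established in Section \ref{subsectDeltaQbounds}.

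The main obstacle is bookkeeping of the degenerate and boundary cases, handled exactly as around (\ref{firsttimeintegralappears}). If $H\propto\1_R$ then $\Delta Q=0$, all thermal states coincide, and both sides of (\ref{eqninintegralversion}) vanish, so this case should be dispatched separately. When $\beta$ or $\beta'$ equals $\pm\infty$ — which can happen only if the corresponding energy equals $E_{min}(H)$ or $E_{max}(H)$ — the integrand $\beta(E)$ diverges at that endpoint, but the integral still converges because $S(E)$ extends continuously to the closed interval by Lemma \ref{lemmaSEbeta}; likewise $D(\rho'_R\|\rho'_{R,th})$ stays finite there, since $\tr{H\rho'_R}$ then equals $E_{min}(H)$ or $E_{max}(H)$, which forces $\supp[\rho'_R]\subseteq\supp[\rho'_{R,th}]$. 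Throughout, the identity (\ref{freeenergydiffasD}) is invoked with the usual conventions for $\beta,\beta'=\pm\infty$ noted in Appendix \ref{thermodynamicappendix}.
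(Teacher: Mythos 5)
Your argument is correct and follows essentially the same route as the paper: the Second Law Lemma turns $\Delta S+I(S':R')$ into $S(\rho'_R)-S(\rho_R)$, the equal-energy property of $\rho'_{R,th}$ turns $D(\rho'_R\|\rho'_{R,th})$ into $S(\rho'_{R,th})-S(\rho'_R)$ (your use of (\ref{freeenergydiffasD}) is just a repackaging of the paper's direct substitution of the thermal form), and Lemma \ref{lemmaSEbeta} converts the thermal entropy difference into the integral, with the degenerate cases $H\propto\1_R$ and $\beta'=\pm\infty$ handled as in the paper. The only caveat is your parenthetical shortcut of subtracting (\ref{landauereqnshort}) from (\ref{firsttimeintegralappears}): as noted in Remark \ref{finiteintversionremark}, that subtraction is ill-defined when $\beta\Delta Q=\infty$ (i.e.\ $\beta=\pm\infty$ with $\Delta Q\neq0$), so it does not replace the self-contained argument you give first.
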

\begin{proof}If $H\propto\1_R$, then necessarily $\Delta Q=0$ and we define the integral to be $0$ even though $\beta(E)$ is not well-defined in this case (see Appendix \ref{thermodynamicappendix}). The statement then follows immediately from (\ref{landauereqnshort}) since $\rho'_{R,th}=\rho_R$ as all thermal states on such a reservoir agree.

The proof in the general case starts again with the Second Law Lemma (Lemma \ref{propsecondlaw}):
\begin{align}
\Delta S+I(S':R')~
&=~\Delta~=~S(\rho'_R)-S(\rho_R)~=~\left[S(\rho'_{R,th})-S(\rho_R)\right]-\left[S(\rho'_{R,th})-S(\rho'_R)\right]~.\label{intermediateinintegralversion}
\end{align}
Denoting by $\beta'$ the inverse temperature of $\rho'_{R,th}$ (see beginning of Section \ref{subsectDeltaQbounds}), the last square brackets can, for $\beta'\in(-\infty,+\infty)$, be rewritten as (cf.\ also (\ref{pythagoreantheoremforD}) and following):
\begin{align}
\left[S(\rho'_{R,th})-S(\rho'_R)\right]~
&=~-S(\rho'_R)-\tr{\rho'_{R,th}\log\frac{e^{-\beta'H}}{\tr{e^{-\beta'H}}}}\\
&=~-S(\rho'_R)-\tr{\rho'_R\log\frac{e^{-\beta'H}}{\tr{e^{-\beta'H}}}}~=~D(\rho'_R\|\rho'_{R,th})~.\label{secondintermstepinintegralversion}
\end{align}
For $\beta'=\pm\infty$, one can explicitly verify (\ref{secondintermstepinintegralversion}), using that $\supp[\rho'_R]\subseteq\supp[\rho'_{R,th}]$ in this case.

Finally, as an entropy difference between two thermal states and due to Lemma \ref{lemmaSEbeta}, the first square brackets $[S(\rho'_{R,th})-S(\rho_R)]$ in (\ref{intermediateinintegralversion}) equals the integral on the RHS of (\ref{eqninintegralversion}). This is exactly the same step made in Eq.\ (\ref{firsttimeintegralappears}), and finally proves (\ref{eqninintegralversion}).
\end{proof}

\begin{remark}[Finiteness of the integral version]\label{finiteintversionremark}Note that statement (\ref{eqninintegralversion}) of the integral version of Landauer's Principle and in particular the term $D(\rho'_R\|\rho'_{R,th})$ is always finite due to $\supp[\rho'_R]\subseteq\supp[\rho'_{R,th}]$, which is easily verified. The finiteness of (\ref{eqninintegralversion}) is in contrast to the terms $\beta\Delta Q$ and $D(\rho'_R\|\rho_R)$ in the equality form (\ref{landauereqnshort}), which both equal $+\infty$ iff $\Delta Q\neq0$ and $\beta=\pm\infty$.

Note that the three equations (\ref{landauereqnshort}), (\ref{firsttimeintegralappears}), and (\ref{eqninintegralversion}) are consistent. However, the third one cannot be obtained directly by subtracting the first one from the second, due to ill-definedness in cases where $\beta\Delta Q=\infty$.
\end{remark}

One may consider the integral in (\ref{eqninintegralversion}) as a natural analogue of the term $\beta\Delta Q$ for a finite reservoir, especially due to the physical intuition that it must undergo temperature changes $\beta=\beta(E)$ because of its bounded heat capacity in finite dimensions (see below (\ref{heatcapacityT})). However, the quantities appearing in Theorem \ref{landauerintegralthm} are somewhat artificial: The state $\rho'_{R,th}$ may generally not appear physically in the process, just as little as any of the thermal states that give $\beta(E)$ its meaning (Appendix \ref{thermodynamicappendix}). Furthermore, the general discrete-step formulation of the process (Section \ref{setupsubsection}) dissonates with the integral expression in (\ref{eqninintegralversion}). At any rate, since (\ref{eqninintegralversion}) and (\ref{landauereqnshort}) are equivalent (through the identity (\ref{firsttimeintegralappears}), at least for $\beta\Delta Q<\infty$), all lower bounds on the heat expenditure $\Delta Q$ derived from either equality version of Landauer's Principle agree.

\subsection{Processes on several independent systems}\label{procindepsystemsappendix}
Consider $k$ systems $S_1,\ldots,S_k$, with initial states $\rho_{S_1},\ldots,\rho_{S_k}$, on each of which a separate process prepares a desired final state $\rho'_{(i)}={\rm tr}_{R_i}[U_i(\rho_{S_i}\otimes\rho_{R_i})U_i^\dagger]$ using reservoirs $R_i$ that are initially in thermal states $\rho_{R_i}$, with Hamiltonians $H_i$ and all at the same inverse temperature $\beta$. By Landauer's bound (\ref{Lineqineqthm}), the total heat dissipated in all $k$ processes satisfies
\begin{equation}\label{sumofheats}
\beta\sum_{i=1}^k\Delta Q_i~\geq~\sum_{i=1}^k\Delta S_i~,
\end{equation}
with $\Delta S_i:=S(\rho_{S_i})-S(\rho'_{(i)})$ and an obvious definition for $\Delta Q_i$ (cf.\ Eq.\ (\ref{DeltaQ})).

The question is now whether there exists a ``joint process'', acting jointly on all $k$ systems $S_1,\ldots,S_k$ and a large reservoir $R$ at inverse temperature $\beta$, such that the total heat dissipation can be less than the lower bound $\sum_i\Delta S_i$ from Eq.\ (\ref{sumofheats}). For this, we assume the systems $S_i$ to be initially uncorrelated, such that their joint state $\rho_S$ is
\begin{equation}\label{productonmanystates}
\rho_S~=~\rho_{S_1}\otimes\ldots\otimes\rho_{S_k}~.
\end{equation}
Furthermore, while the final state $\rho'_S:={\rm tr}_R[U(\rho_S\otimes\rho_R)]$ may be correlated among the different subsystems $S_i$, we demand that the reduced state on each individual system $S_i$ agrees with the desired final state $\rho'_{(i)}$ from above, i.e.
\begin{equation}
{\rm tr}_{S_1,\ldots,S_{i-1},S_{i+1},\ldots,S_k}[\rho'_S]~=~\rho'_{(i)}\qquad\forall i=1,\ldots,k\,.
\end{equation}

Then, by an easily derived chaining rule,
\begin{align}\label{chainingrule}
S(\rho'_S)~=~\sum_{i=1}^kS(\rho'_{(i)})\,-\,\sum_{i=1}^{k-1}I(S_i:S_{i+1}\ldots S_k)_{\rho'_S}~\leq~\sum_{i=1}^kS(\rho'_{(i)})~.
\end{align}
Thus, the total heat $\Delta Q$ dissipated in the joint process satisfies, by (\ref{Lineqineqthm}) and (\ref{chainingrule}),
\begin{equation}\label{lowerboundonejointprocess}
\beta\Delta Q~\geq~S(\rho_S)-S(\rho'_S)~\geq~\sum_{i=1}^kS(\rho_{S_i})\,-\,\sum_{i=1}^kS(\rho'_{(i)})~=~\sum_{i=1}^k\Delta S_i~.
\end{equation}
The heat expense $\Delta Q$ in any joint process is therefore lower bounded as $\beta\Delta Q\geq\sum_i\Delta S_i$, just like the total heat dissipated in the $k$ separate processes above, see Eq.\ (\ref{sumofheats}).

Further taking the result from Section \ref{attainingsection} that Landauer's bound can be approached arbitrarily closely with suitable reservoirs, the above inequalities show that the least amount of heat can be dissipated in the joint process with a product final state $\rho'_S=\rho'_{(1)}\otimes\ldots\otimes\rho'_{(k)}$; only in this case can the lower bound from (\ref{lowerboundonejointprocess}) be arbitrarily well achieved, just as in the $k$ separate processes.

The assumption (\ref{productonmanystates}) of initially independent systems $S_1,\ldots,S_k$ is essential for the result above. As in Section \ref{memorysection}, one is easily convinced that less heat than $\sum_i(S(\rho_{S_i})-S(\rho'_{(i)}))$ needs to be expended if the systems $S_1,\ldots,S_k$ were initially e.g.\ perfectly classically correlated.

\subsection{Processes in infinite dimensions}\label{sectioninfinitedim}
For systems and reservoirs not described by finite-dimensional Hilbert spaces, some of the quantities appearing in Landauer's Principle may not be defined or need a more careful definition. For example, both the initial and final system entropies may be infinite \cite{wehrlreview,thirringstatphys}, so that their difference $\Delta S$ is ill-defined. And the general equivalent of the thermal states (\ref{definethermalstateinsetup}) are so-called KMS states \cite{brattelirobinson,thirringstatphys}.

Some of our previous treatment, however, carries over to the case when the system $S$ and reservoir $R$ are described by separable Hilbert spaces (see also \cite{thirringstatphys}). We now presuppose this, and assume that the initial state of $S$ is a normal state $\rho_S$ with finite entropy $S(\rho_S)<\infty$. Assume further that a semi-bounded Hamiltonian $H$ is given for the reservoir $R$ such that at an inverse temperature $\beta\in(0,\infty]$ the thermal state $\rho_\beta=:\rho_R$ exists and has finite energy; the latter two conditions are, for $\beta\in(0,\infty)$, equivalent to $\tr{e^{-\beta H}}<\infty$ and $\tr{He^{-\beta H}}<\infty$, and they imply that the entropy $S(\rho_R)$ is finite as well.

Then, for any joint unitary $U$ on $SR$, the Second Law Lemma (Lemma \ref{propsecondlaw}) holds as well (all quantities remain finite, except that the case $S(\rho'_S)=S(\rho'_R)=I(S':R')=+\infty$ may occur). Going through the derivation (\ref{firstlineinproofofLeq}), one sees that $\beta\Delta Q>-\infty$ always, since $H$ is semi-bounded and $\rho_R$ had finite energy by assumption. Furthermore, $\Delta S=-\infty$ implies $\beta\Delta Q=\infty$, which one sees due to $D(\rho'_R\|\rho_R)\geq0$. Thus, the equality form (\ref{landauereqnshort}) of Landauer's Principle holds in the setup of the previous paragraph as well, when employing the usual rules of calculus with $\infty$ and when remembering that in the potentially ambiguous case $\Delta S=-\infty$ one has $I(S':R')=\beta\Delta Q=\infty$. Also, Landauer's bound $\beta\Delta Q\geq\Delta S$ (see (\ref{Lineqineqthm})) always holds.

If one considers a process just as above, but now with infinite $S(\rho_S)$ and finite $S(\rho'_S)$ (such that an infinite amount $\Delta S=\infty$ of entropy is ``erased'' from the system $S$), then one sees $S(\rho'_R)=\Delta=\beta\Delta Q=\infty$, so that Landauer's bound also holds.

Since the conditions for vanishing relative entropy and mutual information (where defined) are as in the finite-dimensional case \cite{ohyapetz}, one can check that the equality considerations from Corollary \ref{correqualityL} carry over to the above setup (with either $S(\rho_S)$ or $S(\rho'_S)$ finite) in the following way: If $\beta\Delta Q=\Delta S<\infty$, then $\Delta S=\Delta Q=0$ and Eq.\ (\ref{conditionsforequality1}) holds with an isometry $V$. But while, even for infinite-dimensional reservoirs, equality in Landauer's bound can be attained only in trivial cases, one can approach the bound arbitrarily closely for any given $\Delta S$ with processes using an infinite-dimensional reservoir (see Section \ref{attainingsection}, and also Appendix \ref{purestateerasure}).

\bigskip

In Appendix \ref{purestateerasure} we use Hamiltonians that are not merely unbounded but that have formally infinite ($+\infty$) energy levels (see also Section \ref{boundonpureness}). This is done in order to have some unpopulated levels in the initial reservoir state $\rho_R=\rho_\beta$. (Thermal states at zero temperature, $\beta=\infty$, may have such unpopulated levels as well, but they are necessarily completely mixed on their support space.) The calculations involving these Hamiltonians are formal. They can be understood as limiting processes, but exact purification as in Appendix \ref{purestateerasure} is achievable only at the limit (the approach to the limit is quantified in Section \ref{boundonpureness}). This issue is similar to the case of exactly zero temperature ($\beta=\infty$), whose physical relevance may be questioned as well.

\section{Erasure towards a pure state}\label{purestateerasure}
In Section \ref{boundonpureness} we saw that, in finite dimensions, any rank-decreasing process $\rho_S\mapsto\rho'_S$ necessarily has $\beta\Delta Q=\infty$, i.e.~requires either a zero-temperature reservoir ($\beta=\infty$) or infinite heat flow (via formally infinite Hamiltonian levels, in particular implying $\|H\|=\infty$). Thus, Landauer's bound $\beta\Delta Q\geq\Delta S$ cannot be tight for finite-dimensional processes with $\rank{\rho'_S}<\rank{\rho_S}$.

Here we show that rank-decreasing processes can come arbitrarily close to Landauer's bound by using an infinite-dimensional reservoir (with Hilbert space $\ell^2$; cf.\ also Appendix \ref{sectioninfinitedim}). To keep the notation manageable, we assume that a mixed initial qubit state $\rho_S={\rm{diag}}(s_1,s_2)$, with $s_1,s_2=1-s_1\in(0,1)$, is to be turned into a pure final state $\rho'_S={\rm{diag}}(1,0)$.

From the argument leading up to Proposition \ref{propositionlambdamin}, one can see that for such a process the initial reservoir state $\rho_R$ needs to have infinitely many unoccupied energy levels (see also last paragraph in Appendix \ref{sectioninfinitedim}),
\begin{align}\label{rhoRinfinitedim}
\rho_R~=~{\rm{diag}}(r_1,0,r_2,0,r_3,0,r_4,0,r_5,0,r_6,\ldots)\,\in\cB(\ell^2)~,
\end{align}where the $r_j$ denote the initial eigenvalues of the (potentially) non-empty levels, which we will determine below. At finite temperature, $\beta\in(0,\infty)$, this means that the energy levels of the reservoir Hamiltonian $H$ corresponding to the unoccupied levels have to be formally $+\infty$. We further choose a unitary $U$ that transforms $\rho_S\otimes\rho_R$ to the final state $\rho'_{SR}=\rho'_S\otimes\rho'_R$ with
\begin{align}
\rho'_R~&=~{\rm{diag}}(s_1r_1,0,s_2r_1,0,s_1r_2,0,s_2r_2,0,s_1r_3,0,s_2r_3,0,\ldots)\label{firstlinerhoprimeRinfinite}\\
&=:~{\rm{diag}}(r'_1,0,r'_2,0,r'_3,0,r'_4,0,r'_5,0,r'_6,0,\ldots)
\end{align}
and $\rho'_S={\rm{diag}}(1,0)$ from above. It is clear that such a unitary exists by just permuting the product basis states, since both $\rho_S\otimes\rho_R$ and $\rho'_S\otimes\rho'_R$ have the eigenvalues $s_ir_j$ for $i=1,2$, $j\in\nat$, in addition to countably many eigenvalues $0$.

We can now compute the heat flow. For this, denote the Hamiltonian energy levels by $h_j$ corresponding to the eigenvalues $r_j$ in (\ref{rhoRinfinitedim}), i.e.\ $r_j=e^{-\beta h_j}/\sum_{j'}e^{-\beta h_{j'}}$. Thus:
\begin{align}
\beta\Delta Q~&=~\sum_{j=1}^\infty\,(r'_j-r_j)\,\beta h_j~=~\sum_{j=1}^{\infty}(r_j-r'_j)\log r_j\\
&=~\sum_{j=1}^\infty(r_j\log r_j-r'_j\log r'_j)\,+\,\sum_{j=1}^{\infty}r'_j(\log r'_j-\log r_j)\\~&=~S(\rho'_R)-S(\rho_R)\,+\,D(\rho'_R\|\rho_R)~;\label{betaDeltaQforinfintite}
\end{align}
those computations are justfied since the $r_j$ will later be chosen such that $\rho_R$ and $\rho'_R$ are normalized states with finite entropies. One can easily see from (\ref{firstlinerhoprimeRinfinite}) that $S(\rho'_R)-S(\rho_R)=S(\rho_S)=S(\rho_S)-S(\rho'_S)=\Delta S$. Together with (\ref{betaDeltaQforinfintite}) this gives
\begin{align}\label{samelawforinfinitedimensions}
\beta\Delta Q~=~\Delta S\,+\,D(\rho'_R\|\rho_R)~,
\end{align}
which coincides with our finite-dimensional equality version of Landauer's Principle (Theorem \ref{landauereqntheorem}), considering that $I(S':R')=0$ here due to a pure $\rho'_S$ (see also Appendix \ref{sectioninfinitedim}). It is now easy to choose the occupation numbers $r_j$ such that $D(\rho'_R\|\rho_R)$, and thus $\Delta Q$, is finite.

To show moreover that the bound $\beta\Delta Q\geq\Delta S$ can be arbitrarily sharp, we have to find $r_j$ such that $D(\rho'_R\|\rho_R)$ in (\ref{samelawforinfinitedimensions}) is arbitrarily close to $0$. One way to do this is the following: Choose any $\varepsilon\in(0,1)$ and define
\begin{equation}
\begin{aligned}
r_1~&:=~0\,,&
r_2~&:=~\varepsilon\,,\\
r_{2k-1}~&:=~(1-\varepsilon)s_1r_k\qquad&
r_{2k}~&:=~(1-\varepsilon)s_2r_{k}\qquad&&\text{for}~k\geq2\,.
\end{aligned}
\end{equation}
One can see that the so defined $\rho_R$ is normalized with entropy $S(\rho_R)=(H_2(\varepsilon)+(1-\varepsilon)S(\rho_S))/\varepsilon$. And the relative entropy term in (\ref{samelawforinfinitedimensions}) is
\begin{align}
D(\rho'_R\|\rho_R)~&=~\sum_{k=1}^\infty s_1r_k\left(\log(s_1r_k)-\log r_{2k-1}\right)\,+\,\sum_{k=1}^\infty s_2r_k\left(\log(s_2r_k)-\log r_{2k}\right)\\
&=~\sum_{k=2}^\infty s_1r_k\left(-\log(1-\varepsilon)\right)\,+\,\sum_{k=2}^\infty s_2r_k\left(-\log(1-\varepsilon)\right)\\
&=~-\log(1-\varepsilon)~,
\end{align}
which indeed approaches $0$ as $\varepsilon\to0$.

Note however that, even in infinite dimensions, no process with $\rho_S\mapsto\rho'_S$ exists that makes the relative entropy term $D(\rho'_R\|\rho_R)$ in (\ref{samelawforinfinitedimensions}) vanish exactly (cf.\ Appendix \ref{sectioninfinitedim}): $D(\rho'_R\|\rho_R)=0$ would mean $\rho'_R=\rho_R$, which would imply that the state $\rho'_S\otimes\rho_R=\rho'_S\otimes\rho'_R=\rho'_{SR}$ (due to purity of $\rho'_S$) would have to be unitarily equivalent to $\rho_S\otimes\rho_R$. But this is possible only when $\rho_S$ was already rank-deficient.

Note finally that some state $\widetilde{\rho}'_S$ which is $\delta$-close to a given (possibly pure) state $\rho'_S$ can be reached with a finite-dimensional reservoir and with $\beta\Delta Q$ arbitrarily close to $S(\rho_S)-S(\rho'_S)$, see Section \ref{boundonpureness}.

\bibliographystyle{alpha}

\end{document}